\newcolumntype{L}[1]{>{\raggedright\let\newline\\\arraybackslash\hspace{0pt}}m{#1}}
\newcolumntype{C}[1]{>{\centering\let\newline\\\arraybackslash\hspace{0pt}}m{#1}}
\newtheorem{theorem}{Theorem}[section]
\newtheorem{lemma}[theorem]{Lemma}
\newtheorem{corollary}[theorem]{Corollary}
\newtheorem{question}{Question}
\theoremstyle{definition}
\def\emph#1{\textbf{\textit{\boldmath #1}}}
\newenvironment{GAMES}
  {\begin{center}\rowcolors{2}{gray!75}{white}\begin{tabular}}
  {\end{tabular}\end{center}}
\def\SETTING#1{{\small [#1]}}
 \gdef\xxxmark{%
   \expandafter\ifx\csname @mpargs\endcsname\relax 
     \expandafter\ifx\csname @captype\endcsname\relax 
       \marginpar{xxx}
     \else
       xxx 
     \fi
   \else
     xxx 
   \fi}
 \gdef\xxx{\@ifnextchar[\xxx@lab\xxx@nolab}
 \long\gdef\xxx@lab[#1]#2{\textbf{[\xxxmark #2 ---{\sc #1}]}}
 \long\gdef\xxx@nolab#1{\textbf{[\xxxmark #1]}}
\newif\ifabstract
\newif\iffull
\newcounter{section-preserve}
\newcounter{theorem-preserve}
\newcommand{\blank}[1]{}
\newtoks\magicAppendix
\newtoks\magictoks
\newif\iflater
\long\def\later#1{\iflater#1\else\magictoks={#1}%
	\edef\magictodo{\noexpand\magicAppendix={\the\magicAppendix \par
			\the\magictoks%
	}}
	\magictodo\fi}
\long\def\both#1{\iflater#1\else\magictoks={#1}%
	\edef\magictodo{\noexpand\magicAppendix={\the\magicAppendix \par
			\noexpand\setcounter{theorem-preserve}{\noexpand\arabic{theorem}}%
			\noexpand\setcounter{theorem}{\arabic{theorem}}%
			\noexpand\setcounter{section-preserve}{\noexpand\arabic{section}}%
			\noexpand\setcounter{section}{\arabic{section}}%
			\noexpand\let\noexpand\oldsection=\noexpand\thesection
			\noexpand\def\noexpand\thesection{\thesection}
			\noexpand\let\noexpand\oldlabel=\noexpand\label
			\noexpand\let\noexpand\label=\noexpand\blank
			\the\magictoks%
			\noexpand\setcounter{theorem}{\noexpand\arabic{theorem-preserve}}%
			\noexpand\setcounter{section}{\noexpand\arabic{section-preserve}}%
			\noexpand\let\noexpand\thesection=\noexpand\oldsection
			\noexpand\let\noexpand\label=\noexpand\oldlabel
	}}
	\magictodo
	\the\magictoks\fi}
\def\magicappendix{\latertrue \the\magicAppendix}
\long\def\both#1{#1}
\let\later=\both
\def\magicappendix{}
\title{The Legend of Zelda: The Complexity of Mechanics}
\author{
\begin{tabular}{c@{\hspace{3em}}c@{\hspace{3em}}c}
  Jeffrey Bosboom%
    \thanks{%
      Massachusetts Institute of Technology,
      \protect\url{{jbosboom,brunnerj,mcoulomb,edemaine,dylanhen}@mit.edu}{}{}}
  &
  Josh Brunner\footnotemark[1]
&
  Michael Coulombe\footnotemark[1]
\\[\medskipamount]
  Erik D. Demaine\footnotemark[1]
&
  Dylan H. Hendrickson\footnotemark[1]
&
  Jayson Lynch%
  \thanks{
  University of Waterloo,
Waterloo, Ontario, Canada,
\protect\url{jayson.lynch@waterloo.ca}}
\\[\medskipamount]
&
  Elle Najt%
    \thanks{%
      University of Wisconsin,
      \protect\url{lnajt@math.wisc.edu}}
\end{tabular}
}
\date{}
\begin{document}

\maketitle

\begin{abstract}
  We analyze some of the many game mechanics available to Link in the
  classic Legend of Zelda series of video games.
  In each case, we prove that the generalized game with that mechanic
  is polynomial, NP-complete, NP-hard and in PSPACE, or PSPACE-complete.
  In the process we give an overview of many of the hardness proof techniques
  developed for video games over the past decade:
  the motion-planning-through-gadgets framework,
  the planar doors framework,
  the doors-and-buttons framework,
  the ``Nintendo'' platform game / SAT framework,
  and the collectible tokens and toll roads / Hamiltonicity framework.
\end{abstract}

\section{Introduction}
\label{sec:intro}

\vspace{-0.5\bigskipamount}

\hbox to \hsize{\hfil ``It's dangerous to go alone!  Take this.''}

\medskip

The Legend of Zelda%
\footnote{All products, company names, brand names, trademarks, and sprites are properties of their respective owners.
Sprites are used here under Fair Use for the educational purpose of illustrating mathematical theorems.}
action--adventure video game series consists of 19 main games
developed by Nintendo (sometimes jointly with Capcom),
starting with the famous 1986 original
which sold over 6.5 million copies \cite{vgsales},
and most recently with Breath of the Wild
which was a launch title for Nintendo Switch
(and is arguably what made the Switch an early success).
In each game, the elf protagonist Link explores a world with enemies and obstacles that can be overcome only by specific collectible items and abilities.
Starting with nothing, Link must successively search for items that unlock new areas with further items, until he reaches and defeats a final boss enemy Ganon.

Across the 35-year history of the series, many different mechanics have been introduced, leading to a varied landscape of computational complexity problems to study: what is the difficulty of completing a generalized Zelda game with specific sets of items, abilities, and obstacles?
Reviewing the two Zelda wikis \cite{fandom,zeldadungeon} and playing the games
ourselves, we have identified over 80 unique items with unique mechanics,
listed in Table~\ref{tbl:items}, throughout the 19 games in the Zelda franchise.
More mechanics could likely be identified from the numerous enemy types
and other game features.

In tribute to the fun and challenge of the Zelda series,
we propose a long-term undertaking where the video-game-complexity community
thoroughly catalogs these mechanics and analyzes which combinations lead to
polynomial vs.\ NP-hard computational problems.
Toward this goal, we analyze in this paper the complexity of several new
combinations of various items, including Hookshot, Switch Hook, Diamond Blocks,
Crystal Switches, Roc's Feather, Pegasus Seeds, Kodongos, Buzz Blobs,
Cane of Pacci, magnetic gloves, Magnesis, Bombs, Bow, Ice Arrows, Water,
Fairies, Magic Armor, Decayed Guardians, Statues, Ancient Orbs, and
colored-tile floor puzzles.

Table~\ref{summary} summarizes our results, along with previously known results
about Legend of Zelda.
The first paper to analyze Legend of Zelda games,
from FUN 2014 \cite{Nintendo_TCS}, showed that
Zelda with push-only blocks is NP-hard;
Zelda with Hookshot, push-and-pull blocks, chests, pits, and tunnels is NP-hard;
Zelda with Small Keys, doors, and ledges is NP-hard;
Zelda with ice and sliding push-only blocks is PSPACE-complete;
and Zelda with buttons, doors, teleporter tiles, pits, and Crystal Switches
that activate raised barriers is PSPACE-complete.
More recent work from FUN 2018 \cite{Toggles_FUN2018} showed that
Zelda with spinners is PSPACE-complete.
Many more items and mechanics remain to be analyzed;
refer to Table~\ref{tbl:items} in Section~\ref{sec:open problems}.

Our new results also serve to highlight different techniques for proving
polynomial/NP algorithms, NP-hardness, and PSPACE-hardness of
video games involving the control of a single agent/avatar.
For algorithms, we apply the powerful technique of shortcutting to enable simple searches for solution paths through seemingly complex dungeons, and fixed-parameter tractability analysis to achieve efficiency as dungeons get larger but the game mechanics stay constant.
One major category is Hamiltonian Path inspired reductions, often simplified with Viglietta's Metatheorem 2 \cite{HardGames12} concerning collectible items and toll roads.
Next is the Nintendo-style SAT reduction from \cite{Nintendo_TCS} which later acted as inspiration for the Turrets Metatheorem from \cite{Portal} and the door-opening gadgets in \cite{Gadgets_ITCS2020}.
For PSPACE-hardness, we use the door-and-button framework of Fori\v{s}ek \cite{Forisek10} and Viglietta \cite{HardGames12}.
Finally, we use the door gadget from \cite{Nintendo_TCS} which, along with the other previous work, inspired the gadgets framework for the complexity of motion-planning problems \cite{Toggles_FUN2018,Gadgets_ITCS2020,Doors_FUN2020} which we also use.
As a secondary goal, we hope that this paper offers a nice sampling of proof techniques showing the hardness infrastructure that have been built up in recent years.

We describe our model of generalized Zelda in Section~\ref{sec:model}.
The paper is then organized into sections roughly corresponding to the complexity classes of our results. 
Section~\ref{sec:P} gives polynomial-time algorithms for hookshot and pots; and switch hook and diamond blocks.
Section~\ref{sec:NP} proves NP-hardness for Zelda with floor puzzles; and hookshots, pots, and keys; and a variety of enemies. We show ways of replacing pots and keys with several other sets of items.
Section~\ref{sec:pspace} proves PSPACE-hardness for magnetic gloves; a more powerful Cane of Pacci (while the original version is in P); the Magnesis rune; minecarts with switchable tracks; and Pedestals with Ancient Orbs or Pressure Places with statures that control doors. 

\definecolor{header}{rgb}{0.29,0,0.51}
\definecolor{gray}{rgb}{0.85,0.85,0.85}
\def\header#1{\centerline{\textcolor{white}{\textbf{#1}}}}
\def\fitheader#1{\textcolor{white}{\textbf{#1}}}
\def\maybe#1{\unskip}

\begin{table}[t]
	\centering
	\small
	\rowcolors{2}{gray!75}{white}
	\begin{tabular}{ | L{5cm} | L{4.8cm} | L{2.9cm} | C{1.7cm} |}
		\rowcolor{header}
		\header{Game Mechanics} & \header{Games with Mechanics} & \header{Result} & \header{Thm} \\ 
		Hookshot, Pots, Pits & ALttP, LA, PH, ALBW & $\in$ P & \ref{thm:hookshot-P} \\ 
		Hookshot, Pots, Pits, Keys & ALttP, LA, PH, ALBW & NP-complete & \ref{thm:hook-pot-key}, \ref{cor:hook-pot-key} \\ 
		Switch Hook, Diamond Blocks, Pits & OoA & $\in$ P & \ref{thm:switchhookP} \\ 
		Crystal Switches, Raised Barriers & ALttP, LA, OoA, PH, ALBW & $\in$ P & \ref{thm:CrystalSwitchP}  \\ 
		Roc's Feather, Pegasus Seeds & OoA, OoS, MM & NP-complete & \ref{thm:RocNP}  \\ 
		Bombs, Renewing Cracked Walls & OoT, MM, OoA, OoS, TWW, TMC, TP, ST, SS \maybe{, TFH}, BotW & NP-complete & \ref{thm:BombNP}\\ 
		Ice Arrows, Water & MM & NP-complete & \ref{Thm:IceArrowNP}\\ 
		Healing Items, Unavoidable Damage Region & ALttP, LA, OoT, MM, OoA, OoS, FS, TWW, FSA, TMC, TP, PH, ST, SS, ALBW, BotW & NP-complete & \ref{thm:FairiesNP} \\ 
		Magic Armor, Unavoidable Damage Region & ALttP, OoT, TWW, TP & NP-complete & \ref{thm:InvincibilityNP} \\ 
		Bow or Bombs, and Crystal Switches for Raised Barriers & ALttP, LA, OoA, TP, PH & NP-complete & \ref{CrystalSwitchNP} \\ 
		Colored-tile floor puzzles & LA, OoA, TMC & NP-complete & \ref{thm:ColorFloorNP} \\ 
		Kodongos, low walls, sword & ALttP & NP-hard & \ref{thm:DodongoNP} \\ 
		Buzz Blobs, Master Sword & ALttP, LA, OoA, OoS, TMC, ALBW, TFH & NP-hard & \ref{thm:BuzblobNP} \\ 
		Decayed Guardians, Bombs & BotW & NP-hard & \ref{thm:GuardianNP} \\ 
		Magnetic gloves, metal orbs, ledges, jump platforms & OoS & PSPACE-complete & \ref{sec:magnet-gloves} \\ 
		Cane of Pacci, ground holes, ledges, tunnels& TMC & FPT in duration \newline PSPACE-complete & \ref{thm:CaneOfPacciFPT} \newline \ref{thm:CaneOfPacciPSPACE} \\ 
		Magnesis Rune, metal platforms & BotW & PSPACE-complete & \ref{thm:MagnesisPSPACE} \\ 
		Statues, Pressure Plates, Doors & ALttP, OoT, MM, OoA, OoS, FS, TWW, FSA, TMC, TP, PH, ST, SS, ALBW & PSPACE-complete & \ref{thm:statue-plates} \\ 
		Ancient Orbs, Pedestals, Doors & BotW & PSPACE-complete & \ref{thm:OrbsPSPACE} \\ 
		Minecarts & OoA, OoS, TMC & PSPACE-complete & \ref{sec:Minecarts} \\ 
		\hline
		Pushable Blocks & Zelda I, LA, OoA, OoS, TMC & NP-complete & \cite{Nintendo_TCS} \\ 
		Pushable/pullable Blocks, hookshot, chests, pits, tunnels & ALttP, LA, OoT, MM, TWW, ALBW & NP-complete & \cite{Nintendo_TCS} \\ 
		Keys, Doors, Ledges & AoL, ALttP, LA, OoT, MM, OoA, OoS, FS, TWW, \maybe{FSA,} TMC, TP, PH, ST, SS, ALBW, BotW \maybe{, TFH} & NP-complete & \cite{Nintendo_TCS} \\ 
		Pushable Blocks, Ice & OoT, MM, OoS, TMC, TP, ST & PSPACE-complete & \cite{Nintendo_TCS} \\ 
		Buttons, Doors, Teleporters, Pits, Crystal Switches & ALttP, ALBW & PSPACE-complete & \cite{Nintendo_TCS} \\ 
		Spinners & OoA, OoS & PSPACE-complete & \cite{Toggles_FUN2018} \\ \hline
	\end{tabular}
	\caption{Summary of complexity results for various game mechanics in
    Legend of Zelda games, along with a list of specific games with those
    mechanics abbreviated according to Table~\ref{tbl:game-info}.
  }
	\label{summary}
\end{table}

\begin{table}[htp]
	\centering
	\small
	\rowcolors{2}{gray!75}{white}
    \begin{tabular}{|c|l|c|c|}
	    \rowcolor{header}
        \fitheader{Release Year} & \fitheader{Game Title or Subtitle} & \fitheader{Abbreviation} & \fitheader{Dimensions} \\
        1986 & The Legend of Zelda & LoZ & {2} \\
        1987 & Zelda II: The Adventure of Link & AoL & {2} \\
        1991 & A Link to the Past & ALttP & {2} \\
        1993 & Link's Awakening & LA & {2} \\
        1998 & Ocarina of Time & OoT & {3} \\
        2000 & Majora's Mask & MM & {3} \\
        2001 & Oracle of Ages & OoA & {2} \\
        2001 & Oracle of Seasons & OoS & {2} \\
        2002 & Four Swords & FS & {2} \\
        2002 & The Wind Waker & TWW & {3} \\
        2004 & Four Swords Adventures & FSA & {2} \\
        2004 & The Minish Cap & TMC & {2} \\
        2006 & Twilight Princess & TP & {3} \\
        2007 & Phantom Hourglass & PH & {2.5} \\
        2009 & Spirit Tracks & ST & {2.5} \\
        2011 & Skyward Sword & SS & {3} \\
        2013 & A Link Between Worlds & ALBW & {2.5} \\
        2015 & Tri Force Heroes & TFH & {2.5} \\
        2017 & Breath of the Wild & BotW & {3} \\
    \hline
    \end{tabular}
    \caption{List of games studied in this paper, with the abbreviations used and the number of dimensions. To avoid repetition, we exclude the title prefix ``The Legend of Zelda:''\ present in all games beyond the first two.}
    \label{tbl:game-info}
\end{table}

\section{Zelda Game Model}
\label{sec:model}

Each game in the Legend of Zelda series implements a unique two- or three-dimensional variant of a common base of gameplay mechanics, which we extract and model for the purpose of writing widely applicable proofs in the remainder of this paper.
These models encompass the 19 Zelda games studied in this paper,
listed in Table~\ref{tbl:game-info}.
For brevity, throughout this paper we use abbreviations for the titles of games in the series, which are listed in the table and are commonly used among players.
The table also includes the 2D or 3D classification for each game.
Four games are categorized as ``2.5D'' because, while they are each implemented and visualized as a 3D polygonal world, the top-down gameplay style and item mechanics in many circumstances more closely fit the 2D model described below.

\subsection{2D}

\emph{Generalized 2D Zelda} is a single-player game in which the player controls an avatar, Link, in a two-dimensional world.
The world contains \emph{dungeons}, each consisting of a network of
\emph{rooms}, which are rectangular grids of square \emph{tiles} that set the stage for free-moving dynamic \emph{objects} (enemies, pots, collectible items, etc.).
The goal of the player is to navigate Link from the designated initial room to the designated final room.
Each tile may contain an \emph{obstacle} (a pit, solid wall, short fence, a door, a chest, grass, water, lava, spikes, etc.)\ or be empty.
Link can collect \emph{items} which are recorded in the \emph{inventory} and can change the actions available to Link (such as being able to shoot arrows) or how other mechanics affect Link (such as taking less damage).

In some cases we will refer to categories of mechanics and give examples of specific instances in various games. For example many games have \emph{pits} such as water, lava, or holes in the ground which cause Link to take damage and return to a prior location on the map. We may also have \emph{unavoidable damage regions} such as spiked floors or blade traps which Link can traverse but will cause damage either on contact with each new tile or over time. For some results we will list these categories or a prototypical example and then list specific instances found in specific games to which it applies.

A \emph{collision mask} is a 2D rectangular bitmap representing the space that an object or obstacle occupies at its location, specified with \emph{pixel} precision. A \emph{collision} occurs when two \emph{collision masks} would overlap after updating an object's position, often either preventing that movement, causing damage, or triggering events. A \emph{sprite} is the visual graphic representing an object or obstacle.

Link's position (as well as other dynamic objects' positions) is represented as a fixed-point number of pixels within the current room, although the position of the sprite and collision mask are rounded to integer pixel coordinates, and
the number of fractional bits in coordinates is taken to be a constant.
Time progresses in discrete \emph{frames}, during which the player sets each input button as pressed or released and then the room's objects are updated.
Four directional buttons allow the player to move Link in eight directions at a speed of $1$ pixel per frame (diagonal unit-speed motion is approximately $1/\sqrt{2}$ pixels in each dimension).
Link will \emph{face} in the cardinal direction he has most recently moved in, which determines the direction of his actions.
Link's collision mask is a box whose size is a quarter of a tile (half in each dimension), preventing movement through the collision of solid obstacles or other objects.

The game takes place in one room at a time: the room containing Link. When Link leaves a room, some of its objects and tiles may have their current state forgotten or saved globally (temporarily or permanently) for the next time Link enters. 
Link himself has persistent state, including a \emph{heart meter}, measuring Link's health in quarter hearts, and an inventory containing \emph{equipment} for attacking enemies and traversing obstacles.

Doors between rooms may be defined as \emph{checkpoints}, and the game stores a record of the most recent checkpoint that Link has passed through.
If Link's heart meter becomes empty, Link returns there with his heart meter refilled to a small number of hearts.
The player can also choose to \emph{save} the game at any point, which creates a record of the state of the game, but Link's saved position is set to the most recent checkpoint (and certain obstacles or objects may be saved differently as well to accommodate this). If the player chooses to \emph{quit} the game at any point, the game resets to the saved state.

\subsection{3D}

In \emph{Generalized 3D Zelda}, rooms are instead three-dimensional spaces bounded by solid \emph{polygons} with fixed-point coordinates, as well as dynamic objects with polygon-defined collision boundaries.
Link and other dynamic objects cannot pass through solid polygons, and are pulled downwards by \emph{gravity}, which means that they will fall if they are not on a polygonal surface and will slide down a sufficiently steep surface.
Long falls will cause Link to experience \emph{fall damage} proportional to the distance fallen beyond a safe threshold.

The player uses a \emph{joypad} to control Link's motion, allowing a choice of a fixed number of angles and magnitudes for Link's velocity in the horizontal plane during each frame.
Link has limited \emph{jump} abilities: running off the top of a cliff, Link will jump rather than fall, and when at the bottom of a short ledge, Link will do a jump to climb up the ledge.
Some items are used by \emph{aiming} in Link's first-person view, where the player uses the joypad to adjust the angle the camera points with a fixed-point precision.

Notably, Generalized 2D Zelda reduces to Generalized 3D Zelda by converting the pixels of 2D collision masks of tile obstacles and dynamic objects into polygonal prisms, using a joypad with only four directions and binary magnitude, and aligning everything on top of one large polygon to counter gravity and avoid jumpable ledges.
Consequently, we describe our hardness results for Generalized 2D Zelda unless the third dimension is necessary.

\section{Polynomial-Time Zelda}\label{sec:P}

This section details polynomial-time algorithms for exploring dungeons given certain restricted sets of items and obstacles. The first pair of results are centered around a short-cutting argument guaranteeing that a solution path through a solvable dungeon can be found that never needs to repeatedly visit the same location, due to the locality of Link's abilities. Next, we consider a common mechanic across the series that has global effects on dungeon traversability, and show that in isolation it is easy to overcome.

\subsection{Hookshot and Switch Hook are in P}

\iftrue
The Hookshot was first introduced in The Legend of Zelda: A Link to the Past, a 2D game.
On use, Link shoots a hook in the direction he is facing, which travels at a fixed velocity until it collides with something (or reaches a maximum distance) and then retracts.
If the hook hits a light-weight object, the object will be carried back to Link, but if the hook hits certain heavy objects, Link will be carried to the collision point.

The hookshot allows Link to collect items or move himself across pits, which are obstacles that usually destroy items and damage Link while teleporting him to where he entered the current room.
A common heavy object to hookshot is a pot. Link can also push pots from one tile to an adjacent empty tile, as well as lift a pot over his head and throw it, which destroys it and may damage enemies it hits.
\else
The Hookshot was first introduced in The Legend of Zelda: A Link to the Past, allowing Link to shoot a hook that can grab and pull light-weight objects like rupees towards Link as well as pull Link towards heavier objects like pots or rocks, especially useful for crossing over pits. Link can also push pots, as well as lift a pot above his head to reveal any hidden items underneath them, but then the pot must be thrown and destroyed for Link to free his hands.
\fi

We consider a dungeon containing rooms with only pots and pits, where Link's only item is the hookshot. Starting from the dungeon entrance, Link must push pots, destroy pots, and hookshot onto pots across pits on the path to the dungeon goal room.

\begin{theorem}\label{thm:hookshot-P}
	Generalized 2D Zelda with the hookshot, pots, and pits is in~P.
\end{theorem}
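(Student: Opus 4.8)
The plan is to sidestep the exponential state space $(\text{Link's tile},\ \text{positions of all pots},\ \text{set of destroyed pots})$ by proving a shortcutting lemma: in a solvable dungeon there is a solution in which Link's trajectory, read as a sequence of tiles, never revisits a tile, so it has length at most the number of tiles and can be found by a polynomial search. The enabling observation is that all of Link's abilities are \emph{local}. A hookshot acts only along the single row or column Link faces, pulling him to the tile just short of the first heavy object it meets; a push moves one pot by one tile and moves Link one tile with it; destroying a pot removes exactly one object adjacent to Link. Crucially, pots can never be ferried between the solid regions they start in: a push cannot move a pot onto a pit, and lifting-and-throwing only destroys a pot rather than relocating it. Hence there is never a reason to carry a resource across the dungeon, which is exactly what would force revisiting.

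First I would build the reachability graph on tiles, whose edges are unit steps between adjacent non-obstacle tiles together with hookshot jumps: from tile $a$ facing direction $d$, if the first heavy object along $d$ is a pot at tile $c$ and the preceding tile $b$ is solid and empty, add an edge $a\to b$. With the pot configuration \emph{fixed}, reachability here is a plain BFS and is clearly polynomial, so all the difficulty lies in how the configuration can change through pushes and destroys. For pushing I would observe that the set of tiles to which a single pot can be maneuvered, treating the other pots as static obstacles, is computable by a polynomial $(\text{Link-tile},\ \text{pot-tile})$ reachability sub-search. For destroying I would observe that its only purpose is to unblock a tile Link must walk through, an entirely local event.

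Then I would prove the shortcut lemma by loop excision. Given any solution in which Link's position repeats, say he is at tile $T$ at times $t_1<t_2$, I would attempt to delete the intervening actions. The only way this loop could be load-bearing is through the net change it makes to the pot configuration, and by locality every such change is confined to pots near the tiles walked during the loop. I would argue each such change is either unnecessary for later progress or reproducible locally at the moment it is needed, so the loop can be deleted or replaced by a strictly shorter local detour; iterating drives the trajectory to a simple path. Together with the polynomial push and destroy sub-searches, this gives the algorithm: grow the set of reachable tiles by a fixed-point iteration, at each step testing whether a not-yet-reached tile can be entered by a hookshot crossing set up by pushing one pot, or by walking after a local destroy, until nothing new is added, and finally check the goal.

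The main obstacle is the interaction between destroying and hookshotting: a pot destroyed to clear a corridor is one no longer available as a later hookshot anchor, so these choices superficially look globally coupled. The heart of the proof is to show the coupling is illusory under locality — that one never needs to preserve a pot for a distant crossing, since every crossing that is ever possible is enabled by pots in its own vicinity, and to rule out Sokoban-style configurations forcing several pots to be choreographed together. Making this precise is the crux, and it is precisely the property that breaks once \emph{keys} are added, where a consumable must be budgeted across the whole dungeon, driving the NP-hardness of the very next result.
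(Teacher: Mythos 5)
Your overall strategy (shortcut a solution using locality of Link's abilities, then search a polynomial-size graph) is the same one the paper uses, but the crux lemma that makes it work is exactly what you leave unproven, and the tile-level loop excision you sketch does not go through as stated. The problematic case is the one you gesture at but never resolve: a loop can be load-bearing because during it Link \emph{pushes} a pot to a new tile of the same solid region, and a later hookshot from a different region uses that pot, at its new position, as the anchor for re-entering this region; excising the loop silently destroys that anchor. Your claim that every configuration change is ``reproducible locally at the moment it is needed'' fails here, because reproducing the push would require Link to already be standing on the very region he is trying to hookshot into. Likewise, your fixed-point ``grow the reachable set'' algorithm presupposes monotonicity, but the raw game is not monotone --- destroying or relocating pots removes hookshot anchors --- so the algorithm's correctness is precisely the content of the missing lemma, not something that can be assumed.

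The paper closes this gap by changing the granularity of the shortcutting argument from tiles to \emph{platforms}: connected components of non-pit tiles joined by walking edges. Two facts then do all the work. First, within a platform Link can always walk anywhere, since any pot in his way can be lifted and destroyed, so pots are irrelevant to intra-platform travel. Second, a pot can only be pushed or destroyed by Link standing on that pot's own platform, so every platform Link has never visited still has all its pots in their initial positions. The induction is then on the \emph{last} platform $p$ visited more than once: everything after the last visit to $p$ lies on never-before-visited platforms, whose pots are intact, so the hookshot edge by which Link finally leaves $p$ was already available on his first visit to $p$, and the whole excursion between the first and last visits can be replaced by a walk inside $p$. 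In particular, pushes are never needed at all, which is why the paper's algorithm (contract platforms, BFS over them, splice in intra-platform walks) requires no pot-maneuvering sub-search. You correctly isolated where the difficulty lives --- the apparent global coupling between destroying pots and using them as anchors --- but the platform decomposition and the last-revisited-platform induction are the missing ideas that turn your ``crux'' into a proof.
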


Throughout this paper, each theorem includes a list of games from
Table~\ref{tbl:game-info} and the relevant mechanics from that game
to which the theorem applies:

\begin{GAMES}{|l|lll|}
\rowcolor{header}
	\fitheader{Applicable Games} & \fitheader{Hookshot} & \fitheader{Pot} & \fitheader{Pit}
  \\ 
	ALttP, LA, ALBW & Hookshot & Pot & Water \\
	PH & Grappling Hook & Barrel & Water \\
  \hline
\end{GAMES}

\begin{proof}
	Given a dungeon, we can construct a directed graph $G$ whose vertices are tiles and whose colored edges indicate the ways Link can visit a tile for the first time: if two tiles are adjacent and neither is a pit, then there are red edges going both directions between them, and if it is possible at tile~$A$ to hookshot an existent pot to land on an empty tile~$B$, then there is a blue edge from $A$~to~$B$.
	
	Let a \emph{platform} be a connected component of vertices joined by red edges.
	Notice that Link can always traverse a red edge (by means of lifting and destroying any pot in his way), but may only traverse a blue edge from platform $p_1$ to $p_2$ if there exists a pot on a specific tile in $p_2$.
	This leads to the following observation:
	
	\begin{lemma}\label{thm:hookshot-skip-platform}
    If there is a solution path, then there is a solution path in which Link visits each platform at most once.
    \end{lemma}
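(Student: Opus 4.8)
The plan is to avoid editing the given path surgically and instead build a fresh solution path via a coarse, platform-level reachability graph, relying on two structural facts about Link's abilities. First, pots are \emph{confined to their platform}: pushing moves a pot only to an adjacent non-pit tile (hence within the same red-component), lifting-and-throwing destroys it, hookshotting a pot pulls Link rather than the pot, and Link cannot walk a carried pot across a pit. Second, by the observation preceding the lemma, within a platform Link can always reach every tile (red edges are never blocked) and can freely push or destroy its pots. Crucially, the pots sitting on a platform $q$ influence \emph{only} which blue edges lead \emph{into} $q$ (a $p\to q$ hookshot latches onto a pot that lies in $q$); they never affect a blue edge leaving $q$, nor any blue edge into a different platform. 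I would therefore define a directed \emph{platform graph} $H$ on the platforms, placing an arc $p\to q$ exactly when some blue edge runs from a tile of $p$ to a tile of $q$ using $q$'s pot in its \emph{initial} position.

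For the \emph{forward} implication, take any solution path and watch the moments Link first sets foot on each platform. Every platform transition must cross a pit and hence is a blue edge, so each first entry into a platform $w$ is a hookshot onto some pot of $w$. But Link has never stood on $w$ before this moment, so that pot is still in its initial position; the first entry is thus an arc of $H$. Reading off the platforms in order of first visit therefore produces a walk in $H$ from the start platform to the goal platform, and contracting it yields a simple directed $H$-path $p_0\to p_1\to\cdots\to p_k$ with $p_0$ the start and $p_k$ the goal.

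For the \emph{backward} implication I would realize this simple $H$-path as an honest play that enters each platform at most once. Inductively, when Link arrives on $p_i$ he can walk (destroying any blocking pots, which is harmless since the pots of $p_i$ only ever gated entry into $p_i$, already accomplished) to the source tile of the arc $p_i\to p_{i+1}$ and hookshot across. The pot he latches onto lives on $p_{i+1}$, which is unvisited on this path, so it still sits in its initial position exactly as $H$ promises; no earlier action could have disturbed it. Following the path to $p_k$ reaches the goal tile while visiting each platform at most once, giving the desired solution.

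The step I expect to carry the real weight is the forward direction's central claim that pot manipulation is never \emph{necessary} for reaching new territory. Everything hinges on the asymmetry that a platform's pots gate only re-entry into that same platform: pushing a pot of $q$ into a new spot can open a fresh hookshot \emph{into} $q$, but since Link already reaches all of $q$ after his first visit, such re-entries buy nothing, and they can never help him leave $q$ or reach a third platform. Making this precise---together with the modeling points that pots cannot migrate across pits and that red edges are always passable---is the crux; once it is in hand, the only remaining global maneuver is the elementary contraction of an $H$-walk to a simple $H$-path.
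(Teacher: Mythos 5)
Your proof is correct in substance but takes a genuinely different route from the paper's. The paper argues by local surgery on a given solution path: it picks the \emph{last} platform $p$ that Link visits more than once, splices the path from the first visit of $p$ directly to the departure point of the last visit, checks that the departing blue edge is still usable (the platforms visited after $p$'s last visit are never-before-visited, so their pots are untouched), and concludes by well-ordering that a shortest solution visits no platform twice. You instead prove a global equivalence: the dungeon is solvable if and only if the goal platform is reachable from the start platform in the platform graph $H$ defined from \emph{initial} pot positions, with the backward direction realizing any simple $H$-path as an honest visit-each-platform-once play. Both arguments rest on the same key insight, which you isolate explicitly as your third structural claim and the paper uses implicitly: pots are confined to, and only gate entry into, their own platform, so pots on never-visited platforms remain in their initial position. (Both proofs also share the tacit modeling assumption that a hookshot aimed into $q$ is never intercepted by a pot of some third platform lying in the line of fire.) Your version buys something extra: it is simultaneously a correctness proof of the algorithm in Theorem~\ref{thm:hookshot-P}, since your $H$ is exactly the contracted graph $G'$ that the algorithm searches.

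One step is misstated, though it is repairable with material you already have. In the forward direction you assert that ``reading off the platforms in order of first visit produces a walk in $H$.'' That is false in general: Link's first entry into $w$ is an $H$-arc whose tail is the platform he is standing on \emph{at that moment}, which need not be the platform first-visited immediately before $w$. For instance, Link can go from $p_0$ to $p_1$, hookshot back to $p_0$, and then first-enter $p_2$ from $p_0$; the first-visit order $p_0, p_1, p_2$ is then not a walk, since $p_1 \to p_2$ need not be an arc of $H$. The fix is immediate from what you already proved: every first entry is an $H$-arc from an already-visited platform, so by induction on first-visit order every visited platform---in particular the goal platform---is $H$-reachable from the start platform, and a shortest such $H$-path is simple. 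That simple path is all your backward direction needs.
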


    \begin{figure}[htp]
    \centering
    \includegraphics[scale=1.00]{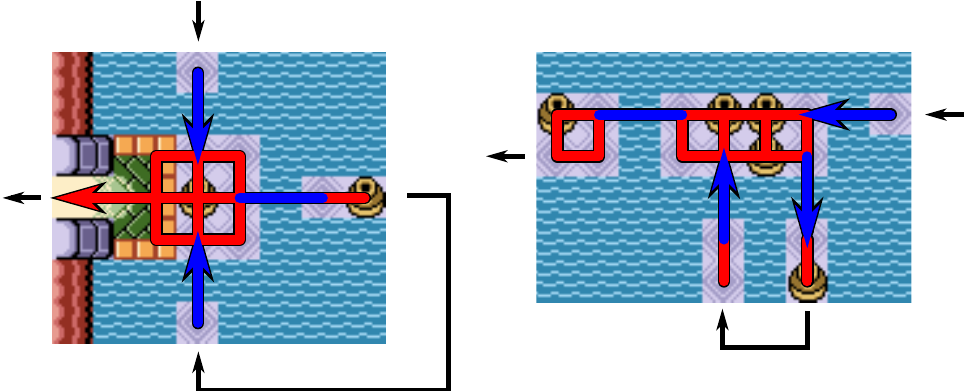}
    \caption{Example cases of Lemma~\ref{thm:hookshot-skip-platform}: (left) If $p$ holds the goal platform, Link can go directly there along red edges. (right) If $p$ isn't the last platform, the last blue edge out of $p$ is already available from red edges when Link first visits $p$.}
    \label{fig:hookshot-shortcut}
    \end{figure}
    
    \begin{proof}
    Refer to Figure~\ref{fig:hookshot-shortcut}.
    Consider the last platform $p$ on a given solution path that Link visits more than once (assuming that $p$ exists, otherwise we are done). We can modify the solution path by skipping from the first time Link visits $p$ to the last visit to $p$.
    
    If the goal tile is on $p$, then once Link enters $p$ for the first time, he may traverse only red edges directly to the goal.
    Otherwise, 
    the path after the last visit to $p$ involves visiting only never-visited-before platforms, so any pots on them were not pushed or destroyed before Link visited $p$. 
    Thus, on Link's first visit to $p$ at tile $t_1$, the blue edge $(t_2,t_3)$ that Link traverses to leave $p$ on his last visit is immediately available, so we can replace the path from $t_1$ to $t_2$ with a path of only red edges in $p$ to get a valid solution that contains strictly fewer platform visits.
    
By well-ordering, we conclude that a shortest solution path cannot have any platforms that Link visits more than once.
    \end{proof}
	
	Using Lemma~\ref{thm:hookshot-skip-platform}, we can derive a polynomial-time algorithm to solve the dungeon. First, we construct the graph $G$, contract its red edges to form a graph $G'$ of platforms with edges corresponding to at least one blue edge in $G$, then run a breadth-first search from the starting tile's platform to the goal tile's platform to find a simple path $q$ of platforms (or determine the dungeon is unsolvable if $q$ does not exist).
	
	Next, we fill in the gaps between $q$'s blue edges to build a simple path $q'$ in $G$. For blue edges $(u,v)$ and $(x,y)$, where tiles $v$ and $x$ are on platform $p$, we run breadth-first search from $v$ to $x$ within the red edges of $p$ and splice the resulting path into $q'$.
	
	Finally, we translate $q'$ into a solution to the dungeon. For each edge $(u,v)$ in $q'$, if the edge is red then we command Link to lift and throw any pot on $v$ then walk from $u$ to $v$, and if the edge is blue then we command Link to hookshot in the direction of $v$.
\end{proof}

A variant of the hookshot was introduced in The Legend of Zelda: Oracle of Ages (also a 2D game), called the Switch Hook. Instead of pulling Link towards a target heavy object, this item swaps their locations, an action which can move unbreakable diamond block obstacles that are otherwise fixed in place.
A similar argument to Lemma~\ref{thm:hookshot-skip-platform}
and Theorem~\ref{thm:hookshot-P} proves the following:

\begin{corollary}\label{thm:switchhookP}
	Generalized 2D Zelda with the switch hook, diamond blocks, and pits is in P.
\end{corollary}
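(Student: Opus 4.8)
The plan is to mirror the reachability argument behind Theorem~\ref{thm:hookshot-P}, replacing the hookshot's pot-pulling with the switch hook's swap and then re-establishing the platform-skipping lemma. I would build a directed graph $G$ on tiles: red edges join adjacent tiles that are neither pits nor currently blocked (so Link can walk between them), and a blue edge runs from tile $A$ to tile $C$ whenever Link standing at $A$ can fire the switch hook across a pit into a diamond block at $C$. Unlike the hookshot, traversing a blue edge does two things at once: it moves Link to $C$ and relocates the struck block back to $A$. A platform is again a red-connected component, and Link can change platforms only along blue edges.

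The adaptation hinges on two facts about a swap, both of which I would record first. A swap is local and conservative: it frees exactly the tile $C$ Link lands on (on the platform he enters) and deposits the struck block on exactly the tile $A$ he vacates (on the platform he leaves). Hence blocks only ever migrate onto platforms Link has already abandoned, never onto an as-yet-unvisited one. This is the switch-hook analog of pots on fresh platforms being untouched: on Link's first arrival at any platform, its diamond blocks are still in their original positions (save the single tile just freed), so every blue edge leaving a freshly entered platform toward a further fresh platform is available immediately, just as in Lemma~\ref{thm:hookshot-skip-platform}.

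With these in hand I would re-prove the skipping lemma: if a solution exists, one exists in which Link visits each platform at most once. Following Lemma~\ref{thm:hookshot-skip-platform}, take the last platform $p$ visited more than once and splice the solution so that, from Link's first visit to $p$, he proceeds directly to the blue edge by which he last exits $p$ into the fresh suffix. Because that exit leads to a fresh platform, the block it swaps with is in its original position and the edge is already available at the first visit; and because the spliced path omits the intermediate excursions, Link never deposits the extra blocks those excursions would have left on $p$, so $p$ retains its pristine layout when he crosses it. Well-ordering then yields a solution visiting each platform once, after which the algorithm is identical in form: contract red edges to a platform graph, BFS from the start platform to the goal platform, and read off the moves (walk red edges; fire the switch hook for each blue edge), all in polynomial time.

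The hard part, and the genuine departure from the hookshot proof, is that diamond blocks cannot be destroyed, so ``Link can always traverse a red edge'' no longer holds and a platform's walkable structure really does change as blocks are relocated. The step I expect to require the most care is verifying that, on the first visit to $p$ in the spliced path, Link can actually reach the tile from which the exit swap is fired: the entry tile and that tile lie in the same walkable region of $p$ only in the pristine configuration, so I would need to argue precisely that the locality of swaps keeps every not-yet-visited platform pristine and that no intermediate excursion ever frees or blocks a tile of $p$ in a way the shortcut cannot reproduce. Nailing this reachability claim is the crux; once it holds, the rest of the argument transfers directly from Theorem~\ref{thm:hookshot-P}.
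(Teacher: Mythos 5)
Your proposal follows the paper's overall architecture (tile graph with red/blue edges, platforms as red components, a re-proved skipping lemma, contraction plus BFS), but it stalls exactly where you say it does, and the paper closes that hole with an observation your write-up never makes: \emph{the switch hook also works on adjacent diamond blocks, swapping Link with them}. Because of this, a diamond block can never obstruct Link's movement within a platform --- whenever a block sits on the next tile of his walking route, he simply swaps with it and continues. The paper therefore defines red edges statically (between adjacent tiles, neither a pit, regardless of where blocks currently sit) and retains the invariant ``Link can always traverse a red edge,'' which is the exact analog of lifting and destroying pots in Theorem~\ref{thm:hookshot-P}. Your definition instead makes red edges state-dependent (``neither pits nor currently blocked''), so your platforms are dynamic objects whose walkable regions can fragment as blocks move, and that is precisely what forces you into the unresolved ``pristine configuration'' bookkeeping.

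With the adjacent-swap observation, the crux you flag --- whether Link, on his first visit to $p$ in the spliced path, can reach the tile from which the exit swap is fired --- evaporates: $p$'s red-connectivity is independent of block placement, so the entry tile and the firing tile are always mutually reachable, and the only remaining condition for the exit blue edge is that its target block on the \emph{next} platform is untouched, which your (correct) claim that unvisited platforms stay pristine already gives. As written, though, your proof is incomplete: you identify the reachability claim as the step needing the most care and offer only the hope that ``the locality of swaps'' handles it, but locality alone does not prevent a platform from being partitioned by immovable-from-a-distance blocks (e.g., a one-tile-wide corridor with a block in the middle is impassable unless Link can swap with that block). So the gap is not a technicality to be nailed down --- it is the one game-mechanical fact the whole reduction rests on, and without it the skipping lemma cannot be established.
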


\begin{GAMES}{|l|lll|}
\hline
\rowcolor{header}
	\fitheader{Applicable Games} & \fitheader{Switch Hook} & \fitheader{Diamond Block} & \fitheader{Pit} \\
\hline
	OoA & Switch Hook & Diamond Block & Lava \\
\hline
\end{GAMES}

\begin{proof}
We sketch a proof analogous to Theorem~\ref{thm:hookshot-P}'s proof.
Given a dungeon, we can construct a directed graph $G$ whose vertices are tiles and whose colored edges indicate the ways Link can visit a tile for the first time: if two tiles are adjacent and neither is a pit, then there are red edges going both directions between them, and if it is possible at tile~$A$ to switchhook to an existent diamond block at tile~$B$, where $A$ and $B$ are not adjacent tiles, then there is a blue edge from $A$~to~$B$. See Figure~\ref{fig:switchhook-shortcut} for an example.

\begin{figure}[htp]
\centering
\includegraphics[scale=1.00]{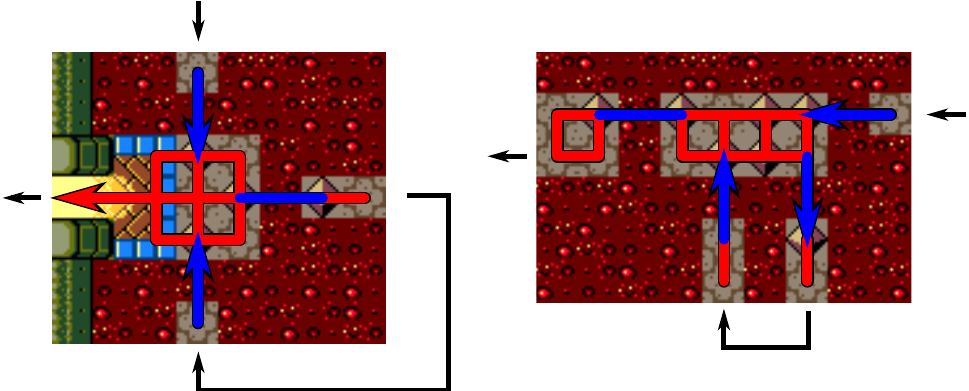}
\caption{Example cases of Corollary~\ref{thm:switchhookP}, analogous to Figure~\ref{fig:hookshot-shortcut}.}
\label{fig:switchhook-shortcut}
\end{figure}

Like Lemma~\ref{thm:hookshot-skip-platform}, here we also get that if there is a solution path, then there is a solution path in which Link visits each platform at most once. Within a platform, Link can reach any tile along red edges by walking through empty tiles and swapping with any adjacent diamond blocks that otherwise block the path. Any unvisited platform must have its diamond blocks in their initial state, since they can only be interacted with using the switchhook, which switches Link's position onto their platform. This means the same shortcutting argument applies to show platform
revisits are unnecessary.

Therefore, constructing the graph $G'$ with all red edges contracted and running a breadth-first search like was described in Theorem~\ref{thm:hookshot-P}'s proof will determine whether a solution path exists in polynomial time.
\end{proof}

\subsection{Crystal Switches With Barriers is in P}
In many Zelda games there are raisable barriers controlled by crystal switches. The barriers can either be lowered, allowed Link to freely pass over them, or raised, blocking entry to that tile. These barriers are colored either red or blue. Crystal switches have both red and blue states and hitting them with swords, boomerangs, bombs, or arrows will cause them to switch state. In the blue state, blue blocks are lowered but red ones are raised, and in the red state the red blocks are lowered but the blue ones are raised. Globally there are only two states for the switches to be in, and thus we can search over the whole state space of the game. However, this will be contrasted in Section~\ref{sec:CrystalNP} where the addition of expendable items which can activate the switches will make the problem NP-complete.

\begin{theorem}\label{thm:CrystalSwitchP}
	Generalized 2D Zelda with one-ways, Crystal Switches, raised barriers, and an inexhaustible way to activate such switches is in P.
\end{theorem}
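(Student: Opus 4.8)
The plan is to show that the entire mutable state space of such a dungeon is polynomial in size, so that a straightforward graph search decides solvability. The crucial observation is the one foreshadowed just before the theorem: although there may be many crystal switches scattered across the dungeon, they are globally synchronized, so the configuration of \emph{all} barriers is captured by a single bit $b \in \{\text{red}, \text{blue}\}$ recording which color is currently lowered. Hitting any switch flips $b$. Because the means of activating the switches is inexhaustible, Link never depletes a resource by toggling, so there is no counter to track. Consequently, the only mutable state of the game is the pair (Link's tile, $b$), and there are at most $2 \cdot (\#\text{tiles})$ such configurations.

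First I would build a layered configuration graph $H$ with two layers, one for each value of $b$. Each layer is a copy of the dungeon's tile-adjacency graph, restricted to the tiles that are passable in that barrier configuration: a tile carrying a raised barrier of the blocking color is deleted from the corresponding layer. Within a layer I add an edge between adjacent passable tiles whenever Link may walk between them; one-ways and ledges simply become directed edges here, which poses no difficulty. Between the two layers I add \emph{toggle edges}: from $(t, b)$ to $(t, 1{-}b)$ whenever Link, standing at $t$ in configuration $b$, can strike some crystal switch (e.g.\ an adjacent switch for a sword, or a switch reachable along an unobstructed line for a projectile). Which positions admit a toggle is determined by the static placement of switches together with the current barrier configuration, so it is precomputable in polynomial time for each of the two values of $b$.

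Then I would run a breadth-first search in $H$ from the start configuration to any configuration whose tile is the goal tile, accepting exactly when the goal is reachable. Correctness in the forward direction follows because any winning play induces a walk in $H$: by the global-bit observation, the only events that change the world are Link's unit moves (layer-internal edges) and switch strikes (toggle edges), and both are represented. The reverse direction holds because every edge of $H$ corresponds to an action Link can actually perform in the indicated configuration. Since $H$ has polynomially many vertices and edges and is built in polynomial time, the whole procedure runs in polynomial time.

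The only parts requiring care are the transition details rather than any deep structure. I would need to model precisely what happens when a toggle would raise a barrier on or immediately around Link's tile---either declaring such a configuration invalid or forbidding the toggle when it would trap or damage Link, and in particular requiring the target tile $t$ to remain passable under $1{-}b$---and to decide whether a given projectile strike reaches a switch through the current barriers. Each of these is a finite, local, polynomial-time check, so none threatens the running time; the entire content of the result is the observation that the switch state collapses to a single global bit, after which the problem is a plain reachability search over a polynomially sized product graph.
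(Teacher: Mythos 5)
Your proposal is correct and follows essentially the same approach as the paper's proof: both construct the two traversability graphs (one per global switch state), connect them with toggle edges at positions from which Link can strike a switch, and decide solvability by a standard graph search on the resulting polynomial-size product graph. Your write-up is in fact more explicit than the paper's about the correctness argument and the local transition edge cases, but the underlying structure is identical.
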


\begin{GAMES}{|L{5cm}|llll|}
\hline
\rowcolor{header}
	\fitheader{Applicable Games} & \fitheader{One-way} & \fitheader{Crystal Switch} & \fitheader{Raised Barrier} & \fitheader{Activator} \\
\hline
	ALttP \SETTING{Tower of Hera}, LA \SETTING{Bottle Grotto}, OoA \SETTING{Crown Dungeon}, PH \SETTING{Temple of Ice}, ALBW \SETTING{Tower of Hera} & Cliff & Crystal Switch & Raised Barrier & Sword \\
\hline
\end{GAMES}

\begin{proof}
	First, construct a traversability graph for the level in each of the two states for the crystal switches. Next, for each of these graphs examine which locations admit an interaction with a crystal switch. For interactions with bounded range, such as the sword or boomerang, this is a constant for each switch. For something of unbounded range this may be linear in the level size. For each location from which an interaction with a crystal switch is possible, connect the corresponding nodes in the traversability graphs. This new graph is at most quadratic in the level size and we can determine reachability by running a standard graph-search algorithm.
\end{proof}

\section{NP-Hard Zelda}\label{sec:NP}
This section gives NP-hardness results for various mechanics in Zelda. The first set of results uses limited resources needed for traversals to show hardness from Hamiltonian Path in grid graphs. This essentially follows Viglietta's Metatheorem 2 \cite{HardGames12} which shows games containing \emph{collectible cumulative tokens} and \emph{toll roads} which consume these tokens in order to pass them. In Zelda games, one can only carry up to a fixed number of these various items at any given time. We will have to generalize this inventory size for these proofs to apply.

The second set of results details some explicit instances of Hamiltonian Path implemented by the mechanics, and
the third set of results, in Section~\ref{sec:enemiesNP}, uses the ``Nintendo'' platform game NP-hardness framework \cite{Nintendo_TCS} to show various combinations of enemies and weapons in the Zelda series are sufficient for NP-hardness.

\subsection{Collectible Objects}\label{sec:HamPath}

In this section we prove the following theorem:

\begin{theorem}\label{thm:hook-pot-key}
Generalized 2D Zelda with the hookshot, pots, pits, and small keys is NP-hard, if save-and-quit and dying are both prohibited.
\end{theorem}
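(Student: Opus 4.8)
The plan is to prove NP-hardness by a reduction from \textbf{Hamiltonian Path in grid graphs}, which is NP-complete, in the style of Viglietta's Metatheorem~2 \cite{HardGames12}: I will realize \emph{collectible cumulative tokens} as small keys and \emph{toll roads} as locked doors (a small key opens a locked door, consuming it). The contrast with Theorem~\ref{thm:hookshot-P} is the whole point---keys are a \emph{global, monotone} resource, so the platform-shortcutting argument that put hookshot${}+{}$pots${}+{}$pits in P breaks down, and a dungeon can now encode the combinatorial constraint of a Hamiltonian path. Because Zelda inventories cap the number of keys at a small constant, I first invoke the promised generalization of inventory size: a solver must be able to hold $\Theta(n)$ keys at once. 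Since grid graphs embed in the integer lattice without edge crossings, the reduction needs no crossover gadget and the instance lays out directly as a 2D dungeon.

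Next comes the heart of the argument, the \textbf{token economy} forcing a traversal to be a Hamiltonian path rather than an arbitrary spanning walk. Let $G$ have $n$ vertices with designated endpoints $s,t$. Each vertex gadget, on its \emph{first} visit, grants a surplus of $2$ keys; each edge of $G$ is a bidirectional corridor gated by a single locked door, so traversing an edge costs $1$ key; and the goal is placed behind a toll of $n+1$ locked doors. The invariant is that after Link has visited $d$ distinct vertices using $m$ edge traversals he holds exactly $2d-m$ keys. Any walk visiting $d$ distinct vertices uses at least $d-1$ edges, i.e.\ $m\ge d-1$, so $2d-m\le d+1\le n+1$, with both inequalities tight exactly when $m=d-1$ (the walk is simple) and $d=n$ (it spans $G$). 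Since reaching the goal requires holding at least $n+1$ keys, a solution exists iff some walk achieves $2d-m=n+1$, i.e.\ iff $G$ has a Hamiltonian $s$--$t$ path; along such a path the count stays at least $2$ before every door, so the route is actually executable.

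I would then \textbf{build the gadgets} inside the restricted mechanic set $\{$hookshot, pots, pits, small keys$\}$ (plus the ambient solid walls of the room model). A vertex gadget is a small walkable platform holding two collectible keys and a junction joining its incident edge-corridors; an edge is a narrow corridor interrupted by one locked door; the goal corridor is the length-$(n+1)$ chain of locked doors. The pits and pots with the hookshot are used exactly as in the Theorem~\ref{thm:hookshot-P} setting to wire platforms together and, where convenient, to make an individual passage one-way (Link hookshots across a pit onto a pot and cannot return), keeping the construction faithful to the games that actually have these items while supplying the layout's connectivity.

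The main obstacle is \textbf{leak-proofing}: the hookshot has nonzero range and pots can be pushed, lifted, and thrown, so I must argue that none of these richer abilities lets Link bypass a locked door (e.g.\ hookshotting over it to a pot on the far side, or ferrying a pot to bridge a pit meant to be impassable) or re-harvest a vertex's keys. This is handled by surrounding every door and corridor with pits and walls wide enough to defeat the hookshot's maximum reach, and by ensuring no pot can be maneuvered past a toll. The remaining subtlety is precisely what the hypothesis excludes: \emph{save-and-quit and dying are prohibited} so that key consumption is irreversible---were the player allowed to respawn at a checkpoint with keys refunded or pots regenerated, the monotone budget underlying the invariant $2d-m$ would collapse and the counting would no longer force Hamiltonicity. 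With leak-proofing in place, the equivalence ``dungeon solvable $\iff$ $G$ has a Hamiltonian $s$--$t$ path'' follows, giving NP-hardness; membership in NP (hence NP-completeness, Corollary~\ref{cor:hook-pot-key}) is argued separately by bounding the length of a canonical solution.
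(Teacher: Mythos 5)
Your overall framing (reduction from Hamiltonian $s$--$t$ path in grid graphs, keys as collectible tokens, a final chain of locked doors as the budget check) matches the paper's, but the core counting argument has a genuine gap: it silently assumes \emph{toll-road} semantics for edges, i.e., that every traversal of an edge corridor consumes a key. In Zelda, a small key opens a locked door \emph{permanently}; only the first traversal costs anything, and the mechanic set $\{$hookshot, pots, pits, small keys$\}$ gives you no way to re-lock a door. Under the correct pay-once semantics your invariant $2d-m$ (with $m$ counting traversals) must be replaced by $2d-e$, where $e$ is the number of \emph{distinct} edges whose doors have been opened, and the extremal walks are no longer simple paths but spanning-tree traversals: $e \ge d-1$, with equality whenever the opened edges form a tree on the visited vertices. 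Concretely, for any connected $G$, Link can DFS a spanning tree from $s$ (paying each of the $n-1$ tree doors once, collecting all $2n$ keys, with the key count never hitting zero), then walk back through the already-open doors to $t$, arriving with $2n-(n-1)=n+1$ keys and defeating your goal gate. So your dungeon is solvable whenever $G$ is connected, regardless of Hamiltonicity, and the reduction fails.

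What is missing is a mechanism that makes \emph{vertices} single-visit, and this is exactly where the paper uses the hookshot/pot mechanics rather than key arithmetic. In the paper's construction each vertex is a plus-shaped platform surrounded by pits, holding a single pot (containing one key) at its center; that pot is simultaneously the only hookshot anchor by which the platform can be entered and a physical blocker that Link must lift and destroy in order to cross the platform. Visiting a vertex therefore forcibly destroys the only means of ever re-entering it, so any successful traversal visits each platform at most once; the final corridor of $|V|$ locked doors then forces every platform to be visited, yielding Hamiltonicity with one key per vertex and no per-edge tolls at all. To salvage your version you would need to replace the ``locked door per edge'' tolls with such a self-destructing entry mechanism (or prove that a genuine per-traversal toll is implementable with the allowed items); key counting alone cannot distinguish a Hamiltonian path from a spanning-tree walk once doors stay open.
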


\begin{GAMES}{|l|llll|}
\hline
\rowcolor{header}
	\fitheader{Applicable Games} & \fitheader{Hookshot} & \fitheader{Pot} & \fitheader{Pit} & \fitheader{Small Key} \\
\hline
	ALttP, LA, ALBW & Hookshot & Pot & Water & Small Key \\
	PH & Grappling Hook & Barrel & Water & Small Key \\
\hline
\end{GAMES}

\begin{proof}
We reduce from the Hamiltonian $s$--$t$ path in maximum-degree-$3$ grid graphs
\cite{Degree3GridHamPath}.
Let $G = (V,E)$ be a maximum-degree-$3$ plane grid graph,
and $s, t \in V$ be two vertices in that graph.
The construction in \cite{Degree3GridHamPath} can easily attain the additional
property that $s$ and $t$ are on the boundary face of $G$.%
\footnote{The reduction is from Hamiltonian cycle, and vertices $s$ and $t$
  (which in fact have degree~$1$) come from a common vertex in a planar graph,
  so they belong to a common face, and the embedding can be chosen so that 
  this face is the outside face.}
We construct a dungeon in the following way; refer to Figure~\ref{fig:hookshot-trim}.
In our construction, we will describe distances such that
the hookshots length is 10 units.

The setting of the dungeon will be platforms surrounded by deep water tiles, so Link starting with only a quarter heart cannot step off of the platforms without dying.
For each vertex $v \in V$ located at $(x,y)$, place a plus shaped tetromino tile centered at $(10x,10y)$. Place a pot containing a key in the center of block of each tetromino. 
Link enters the dungeon at $s$. Following $t$ there is a sequence of $|V|$ doors with an exit at the end. If $|V|$ keys have been collected, then all the doors can be opened.

Link can only move from platform to platform by using the hookshot to move to a platform that has a pot. 
Due to the hookshots length, Link can only move between platforms that are adjacent in the grid graph $G$. 
Since the pot blocks the path from one end of the platform to the other, Link cannot move off the platform unless the pot has been removed. This prevents Link from using the hookshot to reach this platform a second time.
This means that a successful traversal of the dungeon is the same as a Hamiltonian $s$--$t$ path in $G$.
\end{proof}

\begin{figure}[htp]
\centering
\includegraphics[scale=1.00,clip,trim=0 0 0 60px]{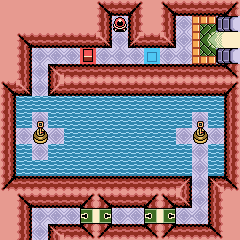}
\caption{Platforms and locked door finale gadgets in the construction of Theorem~\ref{thm:hook-pot-key} from a two-vertex graph.}
\label{fig:hookshot-trim}
\end{figure}

\paragraph*{Complications from Save and Quit.}
One common game mechanic we prohibited above is the ability to save the game, quit to a title screen, and reload the game. In most games in the Legend of Zelda series, Link's inventory is preserved but his location is set to the previous outdoor exit used, and many destructible obstacles like pots are restored to their original state. If Link quits due to dying with zero hearts, then Link may be reset to $3$ hearts.

If we allow the player to die or save and quit, the above construction breaks because that action relocates Link to the beginning of the dungeon and regenerates the pots without removing the collected keys from Link's inventory, which corresponds to allowing the path in the graph to jump to $s$ at any time and reuse edges, violating the Hamiltonian property.

Without prohibiting dying or save and quit, we can show NP-hardness if we can modify the construction to put the game in an unwinnable state if the player uses these mechanics.
This can be done by augmenting the dungeon with a one-use door gadget placed at the entrance.
One way to achieve this is to place an unavoidable damage region (such as floor spikes or flamethrowers) in which traversal of this region will do three hearts of damage to Link. Thus if Link starts the dungeon with more than three hearts, the region will be passable, but upon dying the level will restart with only three hearts and thus the unavoidable damage region will be impassable.

\begin{corollary}\label{cor:hook-pot-key}
Generalized 2D Zelda with the hookshot, pots, pits, and small keys is NP-hard, if save-and-quit and/or dying are allowed.
\end{corollary}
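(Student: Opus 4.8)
The plan is to reuse verbatim the dungeon built in the proof of Theorem~\ref{thm:hook-pot-key} and to prepend a single \emph{one-use entrance gadget} that renders the whole level unwinnable the first time the player invokes death or save-and-quit. I would begin the augmented dungeon with a checkpoint door---the \emph{only} checkpoint in the level---immediately followed by an unavoidable damage region (a short corridor of floor spikes or a flamethrower) lying on solid ground between the entrance and the platform-and-water portion of the original construction. I would start Link with a heart meter of exactly $k+\tfrac14$ hearts, where $k$ is the fixed small number of hearts to which the game refills Link upon death (so $k=3$ in the listed games), calibrate the corridor to inflict exactly $k$ hearts of damage on a full traversal, and include no pots, drops, or items anywhere that restore health. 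Since this only prepends a constant-size gadget, the reduction from Hamiltonian $s$--$t$ path in maximum-degree-$3$ grid graphs remains polynomial.

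With this calibration the first crossing takes Link from $k+\tfrac14$ down to exactly a quarter heart, which is precisely the state the original construction needs: from there Link cannot step into the surrounding water without dying, so the platform-to-platform hookshot argument is inherited unchanged, and a successful traversal still corresponds to a Hamiltonian path in $G$. The forward direction is then immediate---a Hamiltonian path yields a winning play that crosses the corridor once and never needs to die or save-and-quit. The substance is the backward direction, where I must show resets cannot help. The key is the invariant that once Link has crossed the corridor his heart meter is at most a quarter heart and, absent any healing, can never increase, while \emph{every} reset relocates him to the sole checkpoint in front of the corridor: a death refills him to $k<k+\tfrac14$ hearts, and a save-and-quit restores him to the entrance with at most his saved heart meter (hence at most a quarter heart after the first crossing, or at most $k$ even under the most generous refill semantics). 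In either case he then has strictly fewer than the $k+\tfrac14$ hearts required to survive the $k$-heart corridor, so he can never reach the platforms again. Therefore any winning play crosses the corridor exactly once, and after that crossing it is reset-free and identical to a play in the original construction; by Theorem~\ref{thm:hook-pot-key} it traces a Hamiltonian $s$--$t$ path, completing the reduction.

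I expect the main obstacle to be making this one-use argument airtight across the somewhat under-specified heart-meter semantics of save-and-quit, rather than any geometric difficulty: I must verify that no clever interleaving of partial corridor traversals, retreats, deaths, and saves can ever leave Link in front of the corridor holding more than $k$ hearts. The clean way to handle this is to track the single quantity ``Link's heart meter at the moment he next stands at the checkpoint.'' It begins at $k+\tfrac14$; it is never increased, since the dungeon contains no source of healing; and each reset sets it to at most $k$ (on death) or to the most recently saved value, which after any corridor crossing is at most a quarter heart (on save-and-quit). Hence the only time Link stands at the checkpoint with more than $k$ hearts is at the very start, so he gets exactly one chance to pass the corridor, and it is permanently impassable thereafter. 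A secondary detail to pin down is the exact calibration---so that $k+\tfrac14$ hearts survives while $k$ hearts does not---which is routine once we fix whether spike or flame damage accrues per tile or over time.
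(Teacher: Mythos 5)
Your proposal is correct and is essentially the paper's own argument: the paper likewise augments the Theorem~\ref{thm:hook-pot-key} dungeon with a one-use entrance gadget consisting of an unavoidable damage region calibrated against the death-refill amount (three hearts), so that the region is passable exactly once and any death or save-and-quit leaves Link permanently unable to re-cross it. Your more careful bookkeeping (starting at $k+\tfrac14$ hearts so that the first crossing leaves exactly the quarter heart the water-pit argument needs, and the no-healing invariant at the checkpoint) just makes explicit details the paper leaves implicit.
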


\subsection{Additional Hamiltonian Path Hardness}
\label{apx:more-hampath}

Viglietta's Metatheorem 2 \cite{HardGames12} applies to a broad range of items beyond the hookshot, pots, and keys. We present a collection of other item sets which fit the framework as well.
\begin{itemize}
	\item The Roc's Feather is an item that allows Link to jump a distance of one tile, and Pegasus Seeds are consumable items which temporarily give Link the ability to run faster and jump a distance of two tiles. We require Link to collect seeds to jump over two-tile-wide gaps of lava separated by long-enough distances to wear out their effect.
    
	\begin{corollary}\label{thm:RocNP}
		Generalized 2D Zelda with Roc's Feather, Pegasus Seeds, and lava is NP-complete.
	\end{corollary}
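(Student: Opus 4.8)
The plan is to instantiate Viglietta's Metatheorem 2 \cite{HardGames12} by treating Pegasus Seeds as the cumulative tokens and two-tile-wide lava gaps as the toll roads, reducing from the Hamiltonian $s$--$t$ path problem in maximum-degree-$3$ grid graphs \cite{Degree3GridHamPath} exactly as in Theorem~\ref{thm:hook-pot-key}. Given such a graph $G=(V,E)$ with $|V|=n$ and boundary vertices $s,t$, I would place one platform per vertex, separate grid-adjacent platforms by gaps of lava two tiles wide, start Link at $s$, and make the goal reachable only after $t$. Since the Roc's Feather jumps exactly one tile (landing in the middle of a two-tile gap is fatal), crossing any gap requires the two-tile jump granted by a Pegasus Seed; making the platforms and corridors long enough that the seed's boost wears off before the next gap ensures each gap is a genuine toll consuming exactly one seed. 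As the framework introduction notes, I would also generalize the seed inventory capacity from its fixed in-game bound to $\Theta(n)$ so that tokens are truly cumulative.

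The crux is arranging the token counts so that reaching the goal forces a walk that is simultaneously spanning and simple, i.e.\ a Hamiltonian path rather than merely a simple path or an arbitrary covering walk. I would place \emph{two} seeds on each platform and guard the goal, just past $t$, by a final toll corridor of $n+1$ consecutive two-tile gaps (again spaced to wear off the boost), so that clearing it costs $n+1$ seeds. For a walk that visits a set of $m$ distinct vertices using $L$ gap-crossings, Link collects $2m$ seeds and spends $L$, arriving at the corridor with $2m-L$ seeds; covering all vertices requires $L\ge n-1$, while the corridor demands $2m-L\ge n+1$. These force $m=n$ and $L=n-1$ simultaneously, which holds exactly when the visited vertices form a Hamiltonian $s$--$t$ path. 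Conversely any Hamiltonian path is realizable, since after $k$ crossings the running seed total is $2(k+1)-k=k+2>0$, so the prefix budget never goes negative and the corridor is cleared with $0$ seeds to spare.

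For membership in NP, I would argue that any winnable instance admits a winning play of polynomial length: the only state-changing events are the $O(n)$ seed pickups and the gap-crossings, each of which consumes one of the finitely many collected seeds, so there are only polynomially many such events, and between consecutive events Link can follow a shortest (hence polynomial-length) walking path, yielding a polynomial certificate checkable in polynomial time. To obtain NP-completeness without the save-and-quit and dying restrictions, I would reuse the one-use entrance gadget of Corollary~\ref{cor:hook-pot-key}: an unavoidable damage region at the dungeon entrance calibrated to be passable with a full heart meter but impassable after a reset to three hearts, rendering the dungeon unwinnable once Link dies or saves-and-quits (and thereby regenerates the seeds). The main obstacle I anticipate is purely in the bookkeeping of this budget: verifying that no cleverer play—a single boost shared across two gaps, backtracking to recollect regenerated seeds, or crossing a two-tile gap in two single-tile hops—can beat the token accounting, which is precisely what the wear-off spacing, the fatality of landing in lava, and the entrance gadget are designed to rule out.
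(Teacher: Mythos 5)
Your proposal is correct and takes essentially the same approach as the paper: the paper proves this corollary precisely by instantiating Viglietta's Metatheorem 2 \cite{HardGames12} with Pegasus Seeds as the collectible cumulative tokens and two-tile-wide lava gaps, spaced far enough apart that each seed's boost wears off, as the toll roads, following the Hamiltonian-path-in-grid-graphs reduction of Theorem~\ref{thm:hook-pot-key}. The explicit bookkeeping you supply (two seeds per platform, an $(n+1)$-gap exit corridor forcing $m=n$ and $L=n-1$, the NP-membership certificate, and the damage-region entrance gadget against save-and-quit and dying) consists exactly of the details the paper leaves to the cited metatheorem and to Corollary~\ref{cor:hook-pot-key}.
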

	
\begin{GAMES}{|l|lll|}
    \hline
    \rowcolor{header}
	    \fitheader{Applicable Games} & \fitheader{Roc's Feather} & \fitheader{Pegasus Seed} & \fitheader{Pits} \\
    \hline
	    OoA, OoS & Roc's Feather & Pegasus Seed & Lava \\
        MM & Goron Mask & Magic jar & Pit with jump ramps\\
        TWW & Deku Leaf & Magic jar & Pit with high watchtowers \\
    \hline
\end{GAMES}
	
	\item Explosives, such as bombs, bombchus, and bomb arrows, are consumable items which can destroy certain obstacles, including cracked blocks that are regenerated when Link leaves the room. We require Link to collect explosives to pass through rooms blocked by cracked blocks.
    
	\begin{corollary}\label{thm:BombNP}
		Generalized 2D Zelda with regenerating cracked blocks, explosives, and room transitions is NP-complete.
	\end{corollary}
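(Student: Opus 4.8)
The plan is to establish both halves of NP-completeness, with the hardness half following the template of Theorem~\ref{thm:hook-pot-key} and Viglietta's Metatheorem~2~\cite{HardGames12}: explosives play the role of the collectible cumulative tokens (the small keys), and a regenerating cracked block plays the role of a \emph{toll road}, since the regeneration recharges the toll so that each crossing costs a fresh bomb.

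First I would argue membership in NP by exhibiting a polynomial-length certificate. The useful observation is that the regeneration of cracked blocks, far from being an obstacle, makes the obstacle field \emph{memoryless across rooms}: on every entry to a room the cracked blocks are in their initial configuration, so the only persistent state is Link's current room together with his bomb count. Since the dungeon contains only polynomially many bombs, the bomb count ranges over a polynomial set, and hence there are only polynomially many such \emph{macro-states} $(\text{room},\ \text{bomb count})$. Reachability from one macro-state to an adjacent one is a single within-room traversal (walking, plus possibly spending bombs on internal cracked blocks to reach a different exit), which is decidable in polynomial time. A solution is therefore a path in this polynomially sized macro-state graph, so a shortest solution has polynomial length and serves as the certificate.

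For NP-hardness I would reduce from Hamiltonian $s$--$t$ path in maximum-degree-$3$ grid graphs~\cite{Degree3GridHamPath}, exactly as in Theorem~\ref{thm:hook-pot-key}. Writing $n = |V|$, I make each vertex a room stocked with two bombs, each grid edge a short connecting room whose passage is sealed by a single regenerating cracked wall (bomb-able from either side, costing one bomb per crossing and restored whenever Link leaves that room), and I gate the exit at $t$ behind a final toll that can be paid only with $n+1$ bombs. Along any walk that visits $k$ distinct vertices using $m$ edge-crossings, Link's bomb count equals $2k - m$; for a simple path this is $k+1$, so only a simple path visiting all $n$ vertices---a Hamiltonian $s$--$t$ path---accumulates the $n+1$ bombs demanded by the exit, while any early stop ($k<n$) or any revisit ($m \ge n$ when $k=n$) falls short. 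As the paper already notes, this requires generalizing the inventory bound so that Link may carry $\Theta(n)$ bombs.

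The step I expect to be the main obstacle is verifying that the edge-gadget and the final toll admit no loopholes that decouple the bomb economy from the intended path semantics: I must confirm that a crossing cannot be completed for less than one bomb (e.g.\ by bombing and retreating without exiting, which the regeneration-on-exit rule is designed to foreclose), that the two bombs in a vertex room cannot be harvested more than once, and that the $(n+1)$-toll exit cannot be reached by any non-Hamiltonian walk. Once the $2k-m$ accounting is pinned down these are routine, as is the final check that the embedding of~\cite{Degree3GridHamPath} keeps $s$ and $t$ on the outer face so the connecting rooms can be routed without crossings.
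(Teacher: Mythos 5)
Your hardness half is essentially the paper's own approach: the paper proves this corollary by observing that explosives are Viglietta-style collectible tokens and that a room sealed by a regenerating cracked block is a toll road, then invoking the Hamiltonian-path template of Theorem~\ref{thm:hook-pot-key} (Metatheorem~2 of \cite{HardGames12}). Your explicit accounting---two bombs per vertex room, one bomb per edge crossing, a final toll of $n+1$, so a walk visiting $k$ distinct vertices with $m$ crossings ends with at most $2k-m \le k+1 \le n+1$ bombs, with equality exactly for a Hamiltonian $s$--$t$ path---is a correct and more detailed instantiation of that template, and your loophole checklist (no crossing for less than one bomb, no re-harvesting of vertex bombs, $s$ and $t$ on the outer face) matches what the paper's construction handles implicitly.

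There is, however, an internal inconsistency in your NP-membership argument. You claim rooms are \emph{memoryless across entries}, so that $(\text{room}, \text{bomb count})$ is the entire persistent state and a solution is a path in a polynomially sized macro-state graph. But your hardness reduction requires the opposite assumption for pickups: the two bombs in a vertex room must \emph{not} respawn on re-entry (otherwise Link farms unbounded bombs and the reduction collapses). Hence the set of already-collected bombs is persistent state as well, the macro-state graph as you define it is not well-defined (whether an entry into a room yields bombs depends on history, not just on the pair), and ``a solution is a path in this graph'' does not hold as stated. The conclusion is still true and the repair is routine: the dungeon contains only polynomially many bombs $B$, so any play has at most $B$ pickup events and at most $B$ bombing events; between consecutive events the reachable world is static (cracked blocks outside the current room are in their initial state), so each intervening walk can be replaced by a shortest path of polynomial length. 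Therefore some solution has polynomial length and serves as the certificate. With that patch, your proof is correct and follows the same route as the paper.
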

	
\begin{GAMES}{|L{4.2cm}|L{3.5cm}L{2.5cm}l|}
    \hline
    \rowcolor{header}
	    \fitheader{Applicable Games} & \fitheader{Regenerating Cracked Blocks} & \fitheader{Explosives} & \fitheader{Room Transitions} \\
    \hline
        OoA \SETTING{L3}, OoS \SETTING{L2}, TMC \SETTING{under Hyrule Town} & Cracked Blocks & Bombs & Screen transitions\\
        OoT \SETTING{Goron City} & Brown Boulders & Bombs & Area transitions \\
	    MM \SETTING{Mountain Village} & Snow Boulders & Bombs & Area transitions \\
	    TWW \SETTING{Rock Spire Isle} & Large Cracked Rocks & Bombs & Area transitions \\
	    TP \SETTING{Snowpeak Ruins} & Large Barrels & Bombs & Room transitions \\
	    SS \SETTING{Lanayru Desert} & Rock Piles & Bombs & Area transitions \\
	    BotW \SETTING{Ja Baij Shrine} & Cracked Concrete Cubes & Bomb Arrows, Bow & Area transitions \\
    \hline
\end{GAMES}
	
	\item In The Legend of Zelda: Majora's Mask, Ice Arrows are consumable items that can create temporary ice platforms when shot into water from a Bow with sufficient magic power. We require Link collect arrows and Small Magic Refills to cross pools of water that he cannot climb out of.
    
	\begin{corollary}\label{Thm:IceArrowNP}
		Generalized 3D Zelda with the Bow, Ice Arrows, water pools, and Small Magic Refills is NP-complete.
	\end{corollary}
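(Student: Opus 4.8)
The plan is to reduce from Hamiltonian $s$--$t$ path in maximum-degree-$3$ plane grid graphs \cite{Degree3GridHamPath}, exactly as in Theorem~\ref{thm:hook-pot-key}, and to realize the reduction through Viglietta's Metatheorem~2 \cite{HardGames12}: the cumulative collectible token is a single \emph{ice-arrow shot}, i.e.\ the bundle of one arrow together with enough magic (from a Small Magic Refill) to fire it, and the toll road is a pool of water crossable only by spending one such shot. As the paper already flags, Link can normally hold only a bounded number of arrows and magic units, so I first generalize the inventory capacity to be at least linear in the size of $G$; this is the only modeling liberty the reduction needs.

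For the construction, given $G=(V,E)$ with $n=|V|$, I place one small \emph{island} per vertex, laid out on the grid so that two islands are separated by a water gap that is too wide to jump but is spanned by a single Ice-Arrow platform exactly when the corresponding vertices are adjacent in $G$, and by impassably wide, sight-blocked water otherwise. Each island carries a pickup worth two shots (two arrows plus the magic to fire them). To move to a grid-adjacent island, Link aims the Bow across the gap, fires an Ice Arrow to freeze a temporary platform, and jumps across, consuming exactly one shot. After the $t$-island I append a finale containing no pickups: a corridor of $n+1$ water pools forcing Link to spend $n+1$ shots to reach the goal.

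Correctness is a token-counting argument. If Link's route visits $d$ distinct islands using $c$ crossings, he reaches the finale holding $2d-c$ shots, and any route reaching $t$ satisfies $c\ge d-1$, so $2d-c\le d+1\le n+1$, with equality throughout only when $d=n$ and $c=n-1$ --- precisely when the route is a simple spanning $s$--$t$ path in $G$. Conversely, a Hamiltonian $s$--$t$ path lets Link collect $2n$ shots over $n-1$ crossings, arrive at $t$ with exactly $n+1$ shots, and clear the finale. Hence the dungeon is solvable iff $G$ has a Hamiltonian $s$--$t$ path. Membership in NP follows because a winning play can be summarized by the polynomial-length sequence of islands visited and arrows fired, which is simulated in polynomial time.

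I expect the main obstacle to be faithfulness of the Majora's Mask mechanics rather than the combinatorics. In particular, the pools must be ones ``he cannot climb out of,'' so Link cannot simply swim between arbitrary islands and bypass the edge structure; this likely requires high unscalable walls (or a damaging/resetting current) so that the only way onto an island is an Ice-Arrow platform landing at the intended spot. I must also tune the Bow's aiming together with pool widths and sight-blocking geometry so an Ice Arrow reaches only adjacent islands, and ensure that one frozen platform both suffices to cross a gap and cannot be exploited for a free return (its temporary lifetime helps, though the token count already neutralizes any stray revisit). Finally, I must bundle arrows with just enough magic per pickup that shots, not leftover magic, are the true bottleneck, so the $2{:}1$ collect-to-cross accounting is exact.
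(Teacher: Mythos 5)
Your overall route is the same as the paper's: the paper proves this corollary (like the other corollaries in Section~\ref{apx:more-hampath}) by instantiating Viglietta's Metatheorem~2 \cite{HardGames12}, with arrows plus Small Magic Refills as the cumulative collectible tokens and unclimbable water pools as the toll roads, together with the same generalization of inventory capacity that you invoke; your island layout, finale corridor, and $2d-c$ accounting are just an explicit working-out of that instantiation, in the style of Theorem~\ref{thm:hook-pot-key}.

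There is, however, one genuine error in your write-up, and it touches the heart of the ``only if'' direction: the parenthetical claim that ``the token count already neutralizes any stray revisit.'' It does not. Your accounting silently assumes that shots fired equals crossings made, i.e., that every crossing of a gap consumes a shot. If a frozen platform can be reused for a free return trip, then the number of shots fired is bounded below not by the number of crossings $c$ but only by the number of distinct gaps frozen, and those gaps need only form a connected subgraph spanning the visited islands. Concretely, take $G = K_{1,3}$ embedded as a grid graph, with $s$ the center and $t$ one of the three leaves: there is no Hamiltonian $s$--$t$ path, yet with reusable platforms Link can freeze the three gaps (3 shots), make out-and-back trips to the two non-$t$ leaves, and finish at $t$ having collected $2 \cdot 4 = 8$ shots while spending only 3, arriving with $5 = n+1$ shots --- enough to clear your finale. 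So the mechanism charging one shot per crossing (the platform melting before Link can return, enforced e.g.\ by placing pickups far from the landing point or by making each gap require a chain of platforms) is not an optional safeguard; without it the reduction is simply false. Once that is pinned down, your counting argument and NP-membership sketch are correct, and this reliance on ``one shot per crossing'' is also implicit in the paper's own (very terse) treatment.
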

	
\begin{GAMES}{|l|lll|}
    \hline
    \rowcolor{header}
	    \fitheader{Applicable Games} & \fitheader{Ice Arrows} & \fitheader{Freezable Water} & \fitheader{Small Magic Refill} \\
    \hline
	    MM & Ice arrows & Deep Water & Magic jar \\
    \hline
\end{GAMES}
	
	\item Fairies in bottles are automatic heart-refilling consumable items that refill Link's heart meter when it drops to zero, preventing death. Unavoidable damage regions, such as a hallway with flamethrowers, laser-shooting eyes in the walls, a long fall, or a spiked floor, can be sufficiently large to deal a lethal amount of damage, requiring Link to use one fairy to traverse. While bottles usually occupy limited inventory slots, we consider the case with a number of bottles linear in the dungeon size, so that all required fairies can be carried at once. In Breath of the Wild fairies do not need to be contained in bottles and occupy the inventory like other collectible items.
    
	\begin{corollary}\label{thm:FairiesNP}
		Generalized 2D Zelda with fairies, a linear number of empty bottles, and unavoidable damage regions is NP-complete.
	\end{corollary}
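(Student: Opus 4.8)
The plan is to reduce from Hamiltonian $s$--$t$ path in maximum-degree-$3$ grid graphs \cite{Degree3GridHamPath}, reusing essentially the layout of Theorem~\ref{thm:hook-pot-key} but substituting the key/pot currency with fairies: fairies play the role of the \emph{collectible cumulative tokens} and the unavoidable damage regions play the role of the \emph{toll roads} of Viglietta's Metatheorem~2 \cite{HardGames12}. For membership in NP, I would first argue (by the same shortcutting idea as Lemma~\ref{thm:hookshot-skip-platform}) that a solvable dungeon of this form admits a solution of polynomial length, so a move sequence is a polynomial certificate that can be simulated and checked in polynomial time; together with the hardness below this yields NP-completeness. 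The hypothesis of a number of empty bottles linear in the dungeon size is what makes the reduction legal: it lets Link carry all of the fairies the intended route requires simultaneously, circumventing the constant inventory cap of the real games.

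For the construction, I would place one vertex gadget per vertex $v \in V$, each holding a single fairy-in-a-bottle that Link can pick up, and connect two gadgets by a corridor exactly when the corresponding vertices are adjacent in $G$, so that reachability between gadgets mirrors the edge set of $G$. Progress along a corridor is gated by an unavoidable damage region calibrated to deal just over Link's maximum health, so that crossing it is survivable precisely when Link spends one fairy and is otherwise fatal. Link enters at the gadget for $s$, and after the gadget for $t$ I would install a final toll---a run of such damage regions (or one long region)---that consumes enough fairies to certify that Link has gathered all $|V|$ of them, playing the role of the $|V|$ locked doors in Theorem~\ref{thm:hook-pot-key}.

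The correctness argument has the usual two directions: a Hamiltonian $s$--$t$ path is replayed move-for-move to collect every fairy while paying exactly the corridor tolls, finishing with just enough fairies for the final toll; conversely, any winning play projects to an $s$--$t$ walk in $G$, and the budget must force that walk to be a Hamiltonian path. This last forcing is the crux and the step I expect to be hardest. With a single fungible resource the final toll pins down \emph{completeness} (all $|V|$ fairies, hence all vertices, must be gathered), but \emph{simplicity} (no vertex is re-entered) is exactly what pots supply ``for free'' in Theorem~\ref{thm:hook-pot-key} and must be re-engineered here from the damage regions alone. The intended mechanism is to arrange each gadget so that re-entering an already-emptied vertex drains a fairy that can no longer be replenished there, stranding Link below the final toll's requirement---the direct analogue of the pot-removal argument. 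Verifying that this heart-and-fairy economy leaves a Hamiltonian path as the only surviving strategy, over every order in which Link might attempt the gadgets, is the delicate part of the proof, and the place where the damage amounts, starting hearts, and the size of the final toll must all be checked to line up.
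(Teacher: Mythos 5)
Your high-level framing matches the paper's: the paper proves this corollary by directly invoking Viglietta's Metatheorem~2 \cite{HardGames12}, with fairies as the collectible cumulative tokens, lethal unavoidable damage regions as the toll roads (one fairy consumed per traversal), and the linear supply of bottles serving exactly the purpose you identify, namely lifting the games' constant inventory cap. Your NP-membership sketch (polynomially many fairies, hence polynomially many toll crossings, with shortest paths in between) is also fine. The problem is that the explicit economy you propose does not implement the metatheorem correctly, and the step you flag as ``delicate'' is in fact broken as stated.

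Concretely: with one fairy per vertex and a one-fairy toll on every edge corridor, a Hamiltonian $s$--$t$ path collects $|V|$ fairies and spends $|V|-1$ of them en route, so Link reaches $t$ with exactly \emph{one} fairy. Hence a final toll requiring $|V|$ fairies is unpayable even by the intended solution (yes-instances become unsolvable), while a final toll of one fairy is payable after every simple $s$--$t$ path (no-instances become solvable): the feasibility condition ``fairies collected minus tolls paid stays nonnegative'' forces $e = v - 1$, i.e.\ simplicity, but leaves a surplus of exactly one fairy at $t$ no matter how many vertices $v$ were visited, so the remaining fairy count cannot certify completeness. Note this inverts your diagnosis: with per-edge tolls, simplicity is what the budget gives you for free, and completeness is what fails; your re-entry-drain gadgets cannot repair this, since all simple $s$--$t$ paths end with the same count, and a static damage region cannot charge only on second visits anyway. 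The standard fix---and the accounting that Metatheorem~2 encapsulates, which is why the paper can dispense with the construction by citation---is to make each vertex yield strictly more than the cost of reaching it: place \emph{two} fairies in every vertex gadget, keep one-fairy corridor tolls, and make the final toll $|V|+1$ damage regions. Then any winning play satisfies $2v - e \ge |V|+1$ together with $e \ge v-1$, forcing $v = |V|$ and $e = |V|-1$, i.e.\ a Hamiltonian $s$--$t$ path; conversely the Hamiltonian route arrives at $t$ with exactly $|V|+1$ fairies and exits, and the required bottle count $2|V|$ remains linear, as the statement of the corollary demands.
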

    
\begin{GAMES}{|L{5cm}|ll|}
    \hline
    \rowcolor{header}
	    \fitheader{Applicable Games} & \fitheader{Healing Item} & \fitheader{Unavoidable damage region} \\
    \hline
	    ALttP, ALBW, TMC & Fairy Bottles & Floor spike trap \\
	    OoT, MM, TWW, TP, SS & Fairy Bottles & Long fall \\
	    LA & Secret Medicine & Floor spike trap \\
	    OoA, OoS & Magic Potion & Floor spike trap \\
        PH, ST & Purple/Yellow Potion & Floor spike trap \\
        FS \SETTING{Hero's Trial in Anniversary Edition} & Rupees & Blade trap \\
        FSA \SETTING{Tower of Flames} & Fairies & Fire bars \\
        BotW & Fairies & Malice \\
    \hline
\end{GAMES}
	
	\item Multiple games in the series have magic invincibility items, including the Magic Cape, the Cane of Byrna, Nayru's Love, and the Magic Armor, which consume magic power (or rupees, in The Legend of Zelda: Twilight Princess) to prevent all damage.
	Link can collect Small Magic Refills to increase his magic meter and be required to drain it a specific amount to cross unavoidable damage regions.
    
	\begin{corollary}\label{thm:InvincibilityNP}
		Generalized 2D Zelda with a magic invincibility item, Small Magic Refills, and unavoidable damage regions is NP-complete.
	\end{corollary}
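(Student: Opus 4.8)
The plan is to instantiate Viglietta's Metatheorem~2~\cite{HardGames12} in exactly the style of the preceding corollaries, treating the magic meter as the cumulative token and the unavoidable damage regions as the toll roads. A \emph{Small Magic Refill} plays the role of a collectible token: it can be picked up only once and adds a fixed quantum $c$ of magic. A damage region plays the role of a toll road: since such a region deals lethal damage, Link can cross it only by holding the invincibility item active for the entire crossing, which drains magic at a fixed rate, so by choosing the length of the region I can make one crossing cost exactly $c$ magic. Thus magic is gained in fixed units at refills and spent in fixed units at damage regions, which is precisely the cumulative-token/toll-road setting that Metatheorem~2 requires.

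For hardness I would mirror Theorem~\ref{thm:hook-pot-key} and reduce from Hamiltonian $s$--$t$ path in maximum-degree-$3$ grid graphs. Each vertex $v$ becomes a small room entered through a damage region of cost $c$ and containing a single refill of value $2c$, so that the \emph{first} visit to $v$ nets $+c$ magic while every later visit nets $-c$ (the entry toll is paid again but the refill is gone). Each edge of $G$ becomes a free corridor joining the two incident rooms, Link starts with exactly $c$ magic (enough to enter one room), and after $t$ I place one large damage region of cost $(n+1)c$, where $n=|V|$. A short calculation shows that after a walk visiting $d$ distinct vertices with $r$ revisits Link holds $(1+d-r)c$ magic, whose maximum $(n+1)c$ is attained \emph{only} when $d=n$ and $r=0$; hence crossing the final region forces Link to collect every refill and never revisit a room. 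Together with the adjacency imposed by the corridors, a winning play corresponds exactly to a Hamiltonian $s$--$t$ path.

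For membership in NP, note that the total magic ever available is bounded by the sum of all refill values, which is polynomial in the input, while each toll costs at least one unit; therefore any play can contain only polynomially many damage-region crossings, and between crossings and collections Link's reachable positions can be explored by a simple path. Consequently, whenever the dungeon is solvable it is solvable by a play of polynomial length, summarized by the polynomial-size sequence of collection and toll events (equivalently, the candidate Hamiltonian path), which is verified by simulating Link and running a routine reachability search at each fixed magic level. This gives NP-completeness for all the item instantiations listed in the accompanying table (Magic Cape, Cane of Byrna, Nayru's Love, and the Magic Armor, the last spending rupees rather than magic).

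The step I expect to be the main obstacle is the simultaneous calibration of the refill value $2c$, the per-room toll $c$, and the final threshold $(n+1)c$, so that the budget is tight enough to forbid every revisit yet slack enough to admit a genuine Hamiltonian traversal, all while keeping the meter nonnegative at every intended crossing so that no legitimate move is accidentally blocked; this is the bookkeeping at the heart of Metatheorem~2, and it is the part that must be checked carefully. A secondary subtlety is that save-and-quit or death would reset the magic economy back to the entrance, so as in Corollary~\ref{cor:hook-pot-key} I would neutralize them with a one-use entrance gadget (an entry damage region requiring more than the post-respawn hearts), making any such reset fatal.
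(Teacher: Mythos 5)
Your proposal is correct and takes essentially the same route as the paper: the paper proves this corollary precisely by casting Small Magic Refills as Viglietta's collectible cumulative tokens and invincibility-protected crossings of unavoidable damage regions as toll roads, then invoking Metatheorem~2 and the section's Hamiltonian-path framework, so your explicit vertex-room budget ($+c$ on first visit, $-c$ on revisits, final toll $(n+1)c$) is just the instantiation that the paper leaves to the citation and to the analogy with Theorem~\ref{thm:hook-pot-key}. The only point worth adding is the paper's standing assumption (stated at the start of Section~\ref{sec:NP}) that the inventory/meter capacity must be generalized to grow with dungeon size, since your construction requires Link to hold $(n+1)c$ magic at once.
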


\begin{GAMES}{|l|L{2.5cm}lL{3cm}|}
    \hline
    \rowcolor{header}
	    \fitheader{Applicable Games} & \fitheader{Magic Invincibility} & \fitheader{Magic refill} & \fitheader{Unavoidable damage region} \\
    \hline
	    ALttP & \makecell[l]{Magic Cape or \\ Cane of Byrna} & Magic jar & Floor spike trap \\
	    OoT & Nayru's Love & Magic jar & Blade trap \\
	    TWW, TP & Magic Armor & Magic jar and rupees & Blade trap \\
    \hline
\end{GAMES}

	\item Crystal Switches can be activated by limited use ranged weapons such as bombs or bow and arrows. We can construct a toll road by placing a pair of paths blocked by blocks of both colors. From the center of these obstacles, there is a crystal switch which can be reached by our ranged weapon allowing us to switch the color while in between the barriers and thus traverse them.
	\label{sec:CrystalNP}
    
	\begin{corollary}\label{CrystalSwitchNP}
		Generalized Zelda with Crystal Switches and the bow and arrows or bombs is NP-complete.
	\end{corollary}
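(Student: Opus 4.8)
The plan is to instantiate the collectible-token-and-toll-road framework of Viglietta's Metatheorem~2~\cite{HardGames12}---the same framework underlying Theorem~\ref{thm:hook-pot-key}---reducing from Hamiltonian $s$--$t$ path in maximum-degree-$3$ grid graphs~\cite{Degree3GridHamPath}. The two ingredients I must supply are the \emph{cumulative tokens} and the \emph{toll roads}. For tokens I use units of consumable ammunition (arrows, or bombs), placed in fixed-size piles associated with the vertices of the input graph; as in the earlier reductions I assume a generalized inventory so Link can hold a number of arrows/bombs linear in $|V|$. For toll roads I use the crystal-switch gadget sketched before the statement: a corridor blocked by barriers whose colors force Link to toggle the switch to proceed, and the switch is placed so that only the consumable ranged weapon can reach it (out of sword range, with no boomerang in the inventory, so that the inexhaustible activator of Theorem~\ref{thm:CrystalSwitchP} is unavailable). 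Crossing the corridor therefore spends ammunition, matching the metatheorem's hypothesis, and the global vertex/edge layout enforcing Hamiltonicity is then inherited verbatim from the framework of Theorem~\ref{thm:hook-pot-key}.

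The concrete gadget I would build is a \emph{double airlock}: barriers of colors blue, red, blue in sequence with a crystal switch between each consecutive pair. Starting from a fixed baseline state (say blue barriers lowered), Link enters through the open blue barrier, fires the first switch to lower the red barrier (sealing the blue one behind him), passes, fires the second switch to reopen the far blue barrier, and exits. A crossing thus costs exactly two units of ammunition and returns the global switch state to its baseline, so each gadget behaves as an independent ammunition-consuming toll road and the token accounting of the framework goes through with every edge costing two units instead of one. For membership in NP, note that the only persistent state beyond Link's position is the ammunition count (bounded linearly), the single global switch parity, and the monotone record of which piles remain collected; hence any winning play is captured by a polynomial-length sequence of ``collect a pile'' and ``fire a switch'' events, between which Link moves freely within the currently reachable region, all verifiable in polynomial time. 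This complements the hardness to yield NP-completeness.

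The main obstacle, and the only place this genuinely differs from the hookshot reduction, is that crystal switches are \emph{global}: as noted for Theorem~\ref{thm:CrystalSwitchP}, one shot flips \emph{every} barrier in the dungeon at once, so a naive single-toggle corridor couples all the toll roads together and a crossing in one gadget would spuriously open or close barriers elsewhere, destroying the independence the framework needs. The state-restoring double airlock above is precisely the fix, but I would need to verify carefully that the transient mid-crossing toggle cannot be exploited: because Link occupies only one gadget at a time and the airlock reseals behind him, he cannot carry an off-baseline global state out into a second gadget, and any ``useless'' shot merely wastes ammunition and so cannot create reuse beyond what the token economy already forbids. Pinning down this non-exploitation argument---rather than the routine gadget geometry or the membership bound---is the step I expect to require the most care.
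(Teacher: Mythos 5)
Your proposal is correct and sits inside the same framework the paper uses---Viglietta's Metatheorem~2 with consumable ammunition as the cumulative tokens and crystal-switch corridors as the toll roads---but your toll-road gadget handles the crux (the global coupling of all barriers to a single switch state) differently from the paper's. The paper's gadget places a \emph{single} switch between barriers of \emph{both} colors along each corridor (with the pair of paths giving both orderings), so that no matter which global state Link arrives in, he can enter past the lowered barrier, fire once from between the two barriers, and exit past the other; each crossing costs one shot and \emph{flips} the global parity, and independence holds because every gadget is traversable in either parity. Your double airlock instead \emph{pins} the parity: two switches and a blue--red--blue sequence cost two shots per crossing but restore the baseline state, and your invariant (exits are blue, hence Link can never leave a gadget off-baseline) is exactly what makes wasted or out-of-position shots harmless. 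Both designs neutralize the globality problem; the paper's buys a cheaper toll (one shot, no need to reason about a baseline), while yours buys a cleaner invariant that makes the non-interference argument essentially automatic, since any shot anywhere only toggles the one global bit and the token economy already charges for it. Your explicit NP-membership argument (polynomially many collect/fire events separated by free movement) is also a welcome addition, as the paper leaves membership implicit for this corollary; just make sure the token accounting uses the right constants (e.g., two ammunition units per vertex and a final chain of toll corridors after $t$ whose total toll forces all $|V|$ vertices to be visited), since with one token per vertex and a unit exit toll the arithmetic would certify only a simple $s$--$t$ path rather than a Hamiltonian one---a detail the paper likewise delegates to the framework of Theorem~\ref{thm:hook-pot-key} and \cite{HardGames12}.
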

    
\begin{GAMES}{|l|lll|}
    \hline
    \rowcolor{header}
	    \fitheader{Applicable Games} & \fitheader{Crystal Switch} & \fitheader{Barriers} & \fitheader{Consumable activator} \\
    \hline
	    ALttP, LA, OoA, PH & Crystal Switch & Barriers & Bombs \\
        TP \SETTING{Temple of Time} & Crystal Switch & Shifting Walls & Arrows\\
    \hline
\end{GAMES}
	
\end{itemize}

\subsection{Floor Puzzles are NP-Hard}

In The Legend of Zelda: Link's Awakening, originally a 2D game, the dungeon Turtle Rock contains puzzles where a flashing block with the ability to replace pits with floor tiles is waiting next to a large set of pits. Link must remotely navigate the block to fill in every pit, which makes a Treasure Chest appear; the challenge being that the block can only traverse over pit tiles.
A similar type of 2D puzzle appears in The Legend of Zelda: Oracle of Ages, where multiple dungeons have rooms with blue floors and a single yellow tile that follows Link's movements. If Link moves onto a blue tile, the yellow tile replaces the blue tile and leaves behind a red tile, and the goal is to eliminate all blue tiles.
    
\begin{theorem}\label{thm:ColorFloorNP}
	Generalized 2D Zelda with pits and flashing floor-generating blocks, and
	Generalized 2D Zelda with colored-tile floor puzzles are both NP-complete.
\end{theorem}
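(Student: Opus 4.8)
The plan is to show that each of these floor puzzles is, up to a direct geometric encoding, exactly the problem of finding a Hamiltonian path in a grid graph. NP-hardness then follows by reduction from Hamiltonian $s$--$t$ path in maximum-degree-$3$ grid graphs \cite{Degree3GridHamPath}, the same source used in Theorem~\ref{thm:hook-pot-key}, while membership in NP follows from a monotonicity argument. Recall from that reduction that we may assume $s$ and $t$ lie on the outer face of the grid graph $G$.

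For NP membership I would first observe that both mechanics are \emph{irreversible}: a flashing floor-generating block can only step onto pit tiles and immediately converts each one it leaves into solid floor, and a colored tile, once the trailing yellow tile has recolored it, can no longer be productively re-entered. Hence the set of remaining pits (respectively, unvisited blue tiles) only shrinks, so there are at most a linear number of state-changing events, and between consecutive events the agent's relevant configuration is determined by the current positions. Any winning play can therefore be compressed to one whose length is polynomial in the dungeon size, and the sequence of tiles the block (respectively, Link) occupies is a polynomial certificate that can be checked for legality and for covering all target tiles in polynomial time. Thus both problems lie in NP.

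For NP-hardness of the floor-generating-block variant I would exploit the fact that a grid graph is an induced subgraph of the integer lattice, so two vertices are adjacent exactly when they lie at lattice distance one. I place a single pit tile at the lattice position of each vertex of $G$ and make every other tile solid; then two pit tiles are orthogonally adjacent if and only if the corresponding vertices are joined by an edge of $G$, so the block's legal moves over pits correspond precisely to traversing edges of $G$. I place the block on a solid tile whose only adjacent pit is $s$, forcing its traversal to begin at $s$, and I attach a pendant pit tile $t'$ adjacent only to $t$ (possible because $t$ lies on the outer face). Since each pit becomes impassable once filled, the block traces a non-self-intersecting path; filling every pit---the condition that spawns the treasure chest and opens the route to the goal room---is therefore possible if and only if there is a simple path from $s$ covering all pit tiles. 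The degree-one tile $t'$ forces any such covering path to end at $t'$ and hence to pass through $t$ last, i.e.\ to be a Hamiltonian $s$--$t$ path of $G$. The colored-tile variant uses the identical encoding: place one blue tile at each vertex of $G$, make all other tiles solid, start Link (carrying the trailing yellow tile) with $s$ as his only reachable blue neighbor, and attach a pendant blue tile at $t$. Because stepping onto a blue tile recolors the vacated tile and a recolored tile can no longer be entered, Link's walk over blue tiles is again a simple path, and eliminating all blue tiles is equivalent to a Hamiltonian $s$--$t$ path.

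The main obstacle I expect is not the combinatorial correspondence, which is clean, but verifying at the level of the game model that the mechanics genuinely enforce the simple-path structure: I must confirm that neither the block nor Link can move diagonally or otherwise bypass the grid-graph adjacencies, that a filled pit or recolored tile is truly impassable so that no tile is processed twice, and that Link's remote control of the block (or his own freedom to walk on surrounding solid ground) introduces no shortcut that violates the edge-by-edge correspondence with $G$. Establishing these facts pins the reduction to an exact Hamiltonicity instance; it also secures the monotonicity needed for NP membership, since re-entering an already-processed tile can never enlarge the covered set and hence any productive play is monotone.
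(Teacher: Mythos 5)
Your proof follows the same template as the paper's---NP membership via monotonicity of tile states, and NP-hardness by identifying pit/blue tiles with the vertices of a grid graph, so that the puzzle becomes Hamiltonian path from a fixed start to a free end---but the reductions differ in source problem and in how they handle endpoints. The paper reduces from Hamiltonian circuit in grid graphs \cite{GridHamPath}, replacing a single exterior vertex with the block (or yellow tile) to fix the start and leaving the end unconstrained; you reduce from Hamiltonian $s$--$t$ path in maximum-degree-$3$ grid graphs \cite{Degree3GridHamPath} and force the end with a pendant tile $t'$. Your endpoint-forcing is not redundant: since the puzzle's path may stop anywhere, some device is needed for the ``solvable $\Rightarrow$ Hamiltonian'' direction, and the pendant supplies exactly that. (For instance, the tromino $\{(0,0),(1,0),(0,1)\}$ with the start placed at $(1,0)$ is solvable as a puzzle but has no Hamiltonian circuit, so fixing only the start does not by itself yield an equivalence with the circuit problem; your version is tighter on this point than the paper's one-line reduction.)

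The one genuine gap is your claim that the pendant can be attached ``because $t$ lies on the outer face.'' That is false in general. Consider the grid graph on $\{(0,0),\allowbreak (0,1),\allowbreak (-1,1),\allowbreak (1,1),\allowbreak (2,1),\allowbreak (2,0),\allowbreak (2,-1),\allowbreak (1,-1)\}$ with $t=(0,0)$: it is connected, has maximum degree $3$, is a tree (so $t$ is certainly on the outer face), and $t$ has degree $1$; yet every free lattice cell adjacent to $t$ is adjacent to another vertex---$(1,0)$ touches $(1,1)$, $(2,0)$, and $(1,-1)$; $(-1,0)$ touches $(-1,1)$; and $(0,-1)$ touches $(1,-1)$---so no pit adjacent only to $t$ can be added. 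The same objection applies to your solid start tile ``whose only adjacent pit is $s$.'' The start has a clean fix: place the block on $s$'s own lattice cell and delete $s$ from the pit set; since $s$ has degree $1$, the block's unique adjacent pit is $s$'s neighbor $u$, and $G$ has a Hamiltonian $s$--$t$ path if and only if $G-s$ has a Hamiltonian $u$--$t$ path. The pendant at $t$ admits no such generic fix: you must verify (or arrange) that the construction of \cite{Degree3GridHamPath} leaves a free lattice cell adjacent only to $t$---plausible, since $t$ is a degree-$1$ ``antenna,'' but it needs to be checked---and note that, unlike in Theorem~\ref{thm:hook-pot-key}, you cannot scale the graph to create room, because here adjacency of pits is literal lattice adjacency.
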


\begin{GAMES}{|l|L{5.5cm}|}
    \hline
    \rowcolor{header}
	    \fitheader{Applicable Games} & \fitheader{Floor Puzzle} \\
    \hline
	    LA \SETTING{Turtle Rock} & Pits and floor-generating block \\
      OoA \SETTING{Skull Dungeon}, TMC \SETTING{Dark Hyrule Castle} & Colored-tiles \\
    \hline
\end{GAMES}

\begin{proof}
	Dungeons with either of these puzzle types are in NP, as a nondeterministic algorithm can guess the buttons to press for Link to traverse the tiles (himself or controlling the flashing block) to solve each room and reach the goal. Without leaving the room, each floor tile can only be filled once by the flashing block and a colored tile can only change from blue to yellow and from yellow to red at most once, therefore there need only be a polynomial number of required moves by monotonicity.
	
	A room containing either of these puzzles is effectively an instance of the Hamiltonian Path problem on a grid graph with pits from a fixed start vertex to any end vertex. We reduce from the problem of Hamiltonian Circuit in grid graphs \cite{GridHamPath} (with no specified endpoints) by laying out the graph using pit tiles (or blue tiles) and replacing one tile on the exterior with the flashing block (or the yellow tile) to specify the starting vertex.
\end{proof}

\subsection{Fighting Monsters is NP-hard}
\label{sec:enemiesNP}
The ``Nintendo'' platform game NP-hardness framework, established in \cite{Nintendo_TCS}, shows several examples of how to use enemies which can be eliminated from one location, but otherwise block another location to build variables and clauses for a SAT reduction. One-way and crossovers are further needed to establish NP-hardness. In \cite{Portal} the notion of what enemies and environments are appropriate is generalized. In particular, we need one pathway which is impossible to cross if the enemy is present (likely because that enemy will kill Link) and another pathway which is disjoint from the first, but allows Link to safely eliminate the enemy. Crossovers and one-ways are prevalent in Zelda games. There are numerous pairing of items and enemies that could be used in this construction; we give a few examples below. In this section we assume enemies do not respawn. This is particularly relevant for The Legend of Zelda: Breath of the Wild where all enemies in the game respawn periodically during the Blood Moon.
    
\begin{theorem}\label{thm:DodongoNP}
	Generalized 2D Zelda with Kodongos, low walls, and a sword is NP-hard.
\end{theorem}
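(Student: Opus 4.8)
The plan is to apply the ``Nintendo'' platform-game framework of \cite{Nintendo_TCS} (as generalized in \cite{Portal}), reducing from 3SAT. The atomic building block I would construct is a \emph{locked crossing}: a Kodongo positioned in a narrow corridor so that any path walking through that corridor passes within the Kodongo's attack range and is therefore lethal while the Kodongo is alive, together with a disjoint \emph{attack alcove} separated from the Kodongo by a low wall. From the alcove Link can swing the sword over the low wall to kill the Kodongo, while the low wall prevents the Kodongo from retaliating (and prevents Link from simply walking across). Because we assume enemies do not respawn, this gadget behaves as a monotone one-time unlock: the crossing corridor is impassable until the Kodongo is killed from the alcove, after which it is permanently passable.

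Using this block I would instantiate the framework's variable and clause gadgets. For each variable $x_i$ I build a gadget whose two mutually exclusive corridors represent the assignments $x_i = \text{true}$ and $x_i = \text{false}$; geometry, plus a one-way at the entrance and merge, forces Link to traverse exactly one of them. The true-corridor routes Link past the attack alcoves of the Kodongos guarding every positive occurrence of $x_i$, and the false-corridor past the alcoves for the negative occurrences, so that choosing an assignment kills exactly the Kodongos corresponding to satisfied literals. For each clause $c_j = (\ell_1 \vee \ell_2 \vee \ell_3)$ I build a clause-check gadget offering three parallel sub-corridors, the $k$-th blocked by the locked crossing of literal $\ell_k$; Link can traverse the clause iff at least one of its three Kodongos has already been killed, i.e.\ iff at least one literal is satisfied. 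Chaining the variable gadgets, then the clause-check gadgets in series, yields a dungeon solvable iff the formula is satisfiable, with total size polynomial in the formula.

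The connective infrastructure is the remaining work: the wire from a variable corridor to a clause's attack alcove must reach the same physical Kodongo that blocks the clause sub-corridor, so these wires cross one another and must be carried by \emph{crossover} gadgets, and \emph{one-way} gadgets are needed to stop Link from traversing a corridor backwards, from visiting both the true- and false-corridors of a single variable, or from entering a clause sub-corridor during the assignment phase. I would build crossovers and one-ways from low-wall geometry, and where convenient from additional Kodongos used purely as one-directional obstacles, noting that such passages are prevalent in Zelda dungeons.

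The main obstacle I expect is verifying that the locked-crossing block really has the two required, cleanly separated properties given the Kodongo's actual behavior: a Kodongo wanders and breathes fire, so I must confine it with low walls and corridor width so that it cannot leave its blocking position, cannot project fire into the attack alcove, and genuinely makes the crossing corridor lethal. A secondary obstacle is ensuring the unlock is strictly monotone and non-interfering, namely that Link cannot exploit a displaced or half-killed Kodongo, and that killing one Kodongo never opens a crossing other than its own; this is exactly what lets the reduction faithfully encode a truth assignment. Once the block and the crossover/one-way gadgets are pinned down, the monotone ``set literals, then check clauses'' correctness argument standard to the Nintendo framework is routine.
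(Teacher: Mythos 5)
Your proposal takes essentially the same approach as the paper: it invokes the Nintendo SAT framework of \cite{Nintendo_TCS} (as generalized in \cite{Portal}) with exactly the paper's gadget, namely a Kodongo making a long one-tile-wide corridor lethal (the blocked traversal) and a disjoint parallel hallway behind a low wall from which the sword dispatches it (the open traversal). The one obstacle you flagged---keeping the Kodongo's fire out of the attack position---is resolved in the paper not by the low wall blocking retaliation (fireballs, like sword swipes, pass over low walls) but by having Link attack from a tile diagonally adjacent to the Kodongo, whose fireballs travel only in straight lines, with lethality of the blocked corridor guaranteed by first forcing Link down to half a heart via an unavoidable damage region.
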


\begin{GAMES}{|l|lll|}
    \hline
    \rowcolor{header}
	    \fitheader{Applicable Games} & \fitheader{Kodongos} & \fitheader{Low wall} & \fitheader{Sword} \\
    \hline
	    ALttP \SETTING{Palace of Darkness} & Kodongos & Low wall & Sword \\
    \hline
\end{GAMES}

\begin{proof}
	Kodongos are enemies which periodically shoot fireballs in a line. For our blocked traversal we will place a Kodongo behind a low wall at the end of a long hallway which is only one tile wide.
	Low walls prevent Link and Kodongos from walking over them, but do allow Link's sword swipes and the Kodongo's fireballs to pass over the wall. 
	 The hallway is long enough that the Kodongo will shoot a fireball down it at least once if Link tries to traverse the hallway while the Kodongo is there. If Link only has half a heart remaining (perhaps from an initial forced traversal of an unavoidable damage region) then this single fireball will kill him.
	
	For our open traversal, we have another long hallway parallel to our blocked traversal but separated by another low wall. This will allow Link to move diagonally adjacent to the Kodongo and safely dispatch it while preventing entry into the other traversal.
\end{proof}

\begin{figure}[htp]
\centering
\includegraphics[scale=1.00]{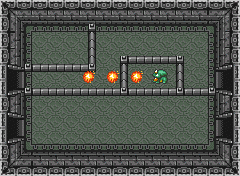}
\caption{Gadget for Theorem~\ref{thm:DodongoNP}}
\label{fig:kodongo-gadget}
\end{figure}
    
\begin{theorem}\label{thm:BuzblobNP}
	Generalized 2D Zelda with Buzz Blobs and the Master Sword is NP-hard.
\end{theorem}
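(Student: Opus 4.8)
The plan is to reuse the ``Nintendo''/Turrets enemy-blocking framework of \cite{Nintendo_TCS, Portal} exactly as in the proof of Theorem~\ref{thm:DodongoNP}, reducing from 3SAT: each Buzz Blob plays the role of a one-time-openable door, and these doors are wired into the framework's standard variable and clause gadgets using the crossovers and one-ways that are readily available in the Zelda environment (e.g.\ a cliff provides a one-way). Since enemies do not respawn, a destroyed Buzz Blob stays dead, so each such door is monotone, which is exactly what the framework requires. As with the Kodongo gadget, the heart of the construction is a pair of disjoint passages: a \emph{blocked} traversal $P_1$ that Link cannot cross while its Buzz Blob is alive, together with an \emph{open} traversal $P_2$ from which Link can safely destroy that Buzz Blob, after which $P_1$ becomes passable.

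First I would fix the Buzz Blob mechanic I rely on. A Buzz Blob is a stationary enemy that electrocutes Link on contact and knocks him back, so it blocks a one-tile-wide corridor; moreover any melee strike shocks Link, so it cannot be safely engaged up close. The safe way to destroy it is the Master Sword's beam, which at full health is emitted in Link's facing direction and travels in a straight line. Because Link only ever destroys a Buzz Blob with this beam, fired from a distance with a clear line of sight, he never makes contact and so remains at full health throughout, which in turn keeps the beam available.

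Next I would build the gadget, paralleling the low-wall trick of Theorem~\ref{thm:DodongoNP} but with beam line-of-sight in place of a sword swipe over a wall. In the blocked traversal $P_1$ a Buzz Blob sits in a one-tile corridor whose walls are arranged so that no straight-line shot from inside $P_1$ can reach it (any beam meets a wall first); thus Link can neither walk past the live Buzz Blob nor destroy it from $P_1$. In the open traversal $P_2$, a parallel corridor is aligned to give Link a clear straight-line shot from a safe distance, so a single beam destroys the Buzz Blob without any melee contact, opening $P_1$. Assembling copies of this gadget into the framework's variable gadgets (Link commits to a truth value by choosing which of two corridors he clears) and clause gadgets (passable iff some literal's corridor has been opened) yields a dungeon that is solvable iff the 3SAT instance is satisfiable.

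The hard part will be the geometric bookkeeping forced by the beam's long range: I must simultaneously guarantee that (i) from $P_1$ no beam can reach its Buzz Blob, (ii) from $P_2$ the beam strikes exactly the intended Buzz Blob, and (iii) a beam fired anywhere inside one gadget cannot travel into a neighboring gadget and open an unintended passage. Because the beam flies until it hits something, confining each shot's effect to its own gadget with walls---while still routing the crossovers and one-ways the framework needs---is the main obstacle; once each beam is made purely local, the variable/clause correctness argument follows the established Nintendo-framework reasoning.
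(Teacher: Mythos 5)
There is a genuine gap at the heart of your gadget: you assume a live Buzz Blob makes a one-tile corridor impassable (``electrocutes Link on contact and knocks him back, so it blocks a one-tile-wide corridor''). In the 2D Zelda model an enemy's contact deals damage and knockback but is not a solid obstacle --- Link can absorb the hit and walk past (post-hit invincibility frames make this easy). So your ``closed door'' is not actually closed, and the reduction is unsound: Link can cross every blocked traversal $P_1$ by tanking damage, making the dungeon solvable regardless of the 3SAT instance. The paper's proof is built precisely around this difficulty: it couples taking damage to losing the Master Sword's beam (which is available only at full hearts), and it places at the very end of the dungeon a single-tile-wide hallway \emph{filled} with Buzz Blobs, which can be cleared only by beam attacks and hence only at full health. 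Its blocked traversal is a Buzz Blob between two ledges: from atop the ledge the beam passes over the blob, and jumping down past a live blob costs Link a hit --- not forbidden locally, but fatal globally, because the final hallway then becomes impassable. Your proposal has no analogue of this global full-health check, and without it the construction fails.

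A secondary issue: even granting the blocking assumption, your claim that walls alone can prevent every straight-line shot from inside $P_1$ from reaching the blob is geometrically untenable --- the blob sits in the very corridor Link walks along, so the tile(s) in front of it have line of sight, and Link can simply face it and fire. To kill line of sight you must either exploit the beam's adjacency exception (the ranged attack does not work when Link is adjacent to the blob) by placing the blob immediately behind a bend, or use height differences as the paper does, with the beam flying over a blob standing on lower ground. This part is fixable, but it is the blocking assumption above, not the beam-confinement bookkeeping you flag as the ``hard part,'' that actually breaks the proof.
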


\begin{GAMES}{|l|ll|}
    \hline
    \rowcolor{header}
	    \fitheader{Applicable Games} & \fitheader{Buzz Blobs} & \fitheader{Sword beam ability} \\
    \hline
	    ALttP, OoA, OoS, ALBW & Buzz Blob & Master Sword \\
	    LA & Buzz Blob & Koholint Sword \\
        TMC \SETTING{Palace of Wind} & Electric Chu Chu & Sword Beam or Peril Beam technique \\
        TFH & Buzz Blob & Sword Suit \\
    \hline
\end{GAMES}

\begin{proof}
	Buzz Blobs are enemies which are not injured by sword swipes. If Link attempts to directly hit a Buzz Blob with a sword, Link will take damage instead. The Master Sword allows Link to shoot a ranged attack if he is at full hearts. This ranged attack is able to damage the Buzz Blobs as long as Link is not adjacent to the Buzz Blob.
	
	For our blocked traversal, we will place a Buzz Blob between two ledges. If Link is on top of the first ledge, the ranged attack from the Master Sword will go over the Buzz Blob. If Link jumps down while the Buzz Blob is present, he will take damage. At the very end of the dungeon, we will construct a hallway which is single tile wide filled with Buzz Blobs. If Link is still at full health when reaching this last hallway, Link will be able to dispatch the Buzz Blobs with ranged attacks from the Master Sword. Otherwise it will be impassable.
	
	For the open traversal, we provide a pathway at the same height as the Buzz Blob which is separated by a pit. Link can safely blast the Buzz Blob from across the pit, but cannot cross the pit himself.
\end{proof}

\begin{figure}[htp]
\centering
\includegraphics[scale=1.00]{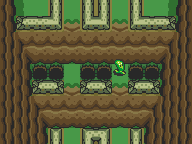}
\caption{Gadget for Theorem~\ref{thm:BuzblobNP}.}
\label{fig:buzz-blob-gadget}
\end{figure}
    
\begin{theorem}\label{thm:GuardianNP}
	Generalized 3D Zelda with Decayed Guardians and bombs is NP-hard.
\end{theorem}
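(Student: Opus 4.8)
The plan is to reuse the ``Nintendo'' platform-game / SAT framework \cite{Nintendo_TCS,Portal} in exactly the same way as the proofs of Theorem~\ref{thm:DodongoNP} and Theorem~\ref{thm:BuzblobNP}. Concretely, I would build a \emph{blocked traversal} gadget that Link cannot cross while the Guardian is alive, together with a disjoint \emph{open traversal} gadget from which Link can safely destroy the Guardian with a bomb. Combined with the crossovers and one-ways already prevalent in Zelda dungeons, a matched pair of such gadgets is precisely what the framework needs to encode variables and clauses, yielding the reduction. As stipulated for this section, I assume the Guardian does not respawn (no Blood Moon is triggered).

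A Decayed Guardian is stationary but can rotate its eye to track Link and, after a short lock-on delay, fire a laser along its line of sight dealing a large, near-lethal amount of damage; it can itself be destroyed by a bomb. For the blocked traversal I would place a Guardian with clear line of sight down a long, narrow corridor, arranging (as in Theorem~\ref{thm:DodongoNP}) that Link is forced to enter this corridor at low health by an initial unavoidable damage region. Making the corridor long enough that the lock-on completes and the laser fires at least once before Link can exit guarantees that an undestroyed Guardian kills Link, so the corridor is impassable unless the Guardian has already been eliminated.

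For the open traversal I would give Link a separate path, disjoint from the corridor, that brings him to a position beside the Guardian which is shielded from the laser by a solid wall---using the third dimension and polygonal geometry to break the Guardian's line of sight---from which he can throw or place a bomb to destroy it. The crux of the construction is this line-of-sight bookkeeping: the walls must simultaneously (i) block the laser so that the open path is genuinely safe, (ii) leave the Guardian exposed down the blocked corridor, and (iii) prevent the open path from itself serving as a shortcut past the blocked corridor. I expect verifying this geometric separation, together with tuning the laser's charge time, damage, and the Guardian's field of view so that the blocked corridor is reliably lethal while the bomb throw is reliably safe, to be the main obstacle. Once the two gadgets are shown to behave correctly in isolation, the reduction follows the framework verbatim, establishing NP-hardness.
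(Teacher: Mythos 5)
Your proposal matches the paper's proof essentially verbatim: the same ``Nintendo'' SAT framework, the same blocked traversal (a long, narrow corridor with a Guardian at the far end, too long to rush before the laser fires), and the same open traversal (a disjoint path beside the Guardian separated by a wall just taller than Link, over which he can throw bombs without being targeted). The only differences are cosmetic: the paper needs no initial damage region, since the laser already deals more than Link's starting three hearts, and it explicitly dismisses the bomb-jump exploit over the wall (lethal at three hearts with no armor), which your ``no shortcut'' condition covers only implicitly.
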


\begin{GAMES}{|l|ll|}
    \hline
    \rowcolor{header}
	    \fitheader{Applicable Games} & \fitheader{Decayed Guardian} & \fitheader{Bombs} \\
    \hline
	    BotW & Decayed Guardian & Bomb Rune \\
    \hline
\end{GAMES}

\begin{proof}
	A Decayed Guardian is an enemy from The Legend of Zelda: Breath of the Wild that has a fixed location but has a laser which can rotate. If Link goes within range the Guardian will take a few seconds to ``lock on'' to Link as a target and then shoot a powerful laser. If Link is unarmored, this deals more than three hearts of damage, the starting maximum number of hearts in the game. However, the Guardian must have a clear line of sight to use its laser.
	
	For our blocked traversal we will have a long, narrow hallway with a Guardian at the end. This hallway will be long enough so that Link will not be able to reach the Guardian before it fires. Further, the hallway is narrow preventing Link from being able to dodge the laser, thus rendering it impassible while the Guardian is there.
	
	For our open traversal we will have another hallway which goes next to the Guardian but is separated by a wall which is slightly taller than Link. This allows Link to safely throw bombs over the wall while not being targeted by the Guardian and not being able to jump over the wall into the blocked traversal. Link could potentially try to bomb-jump over this barrier; however, with only three hearts and no armor attempting a bomb-jump would be lethal. 
\end{proof}

\section{PSPACE-Hard Zelda}\label{sec:pspace}

In this section we show various mechanics in Zelda are sufficient for PSPACE-hardness. 
The first set of results uses the 1-player motion planning framework formalized in \cite{Toggles_FUN2018, Gadgets_ITCS2020} and in particular constructs door gadgets \cite{Mario_FUN2016} and self-closing door gadgets \cite{Doors_FUN2020}.
A \emph{door gadget} has three paths: traverse, open, and close.
The traverse path can be traversed only when the gadget is ``open''.
Traversing the open path opens the gadget, while traversing the close path closes the gadget.
A \emph{self-closing door gadget} is a modified door gadget with two paths, open and traverse, where traversing the traverse pathway also closes it.
An additional set of results, in Section~\ref{sec:switches}, uses the doors-and-buttons framework \cite{HardGames12} and builds 1-switch-2-doors shown PSPACE-hard in \cite{van2015pspace}.

\subsection{Magnetic Gloves is PSPACE-hard}\label{sec:magnet-gloves}

The Magnetic Gloves are an item introduced in The Legend of Zelda: Oracle of Seasons, a 2D game, that projects a north or south magnetic force in any of the four cardinal directions. Among other interactions, they allow Link to remotely attract or repel metal ``N'' orbs, which are polarized north.
Two important properties are the fact that multiple metal objects in range of the force are affected simultaneously, and that metal orbs are affected at any distance, even when off-screen. Since there are no rooms in the game larger than $15\times11$ tiles or containing more than one metal orb, we make the assumptions that the force would affect multiple metal orbs simultaneously and that orbs cannot overlap other orbs, and consider the cases where it has an infinite range and when it has a finite range of at least 15 tiles.
    
\begin{lemma}\label{thm:magnet-pspace}
	Generalized 2D Zelda with magnetic gloves, metal orbs, ledges, and jump platforms is in PSPACE.
\end{lemma}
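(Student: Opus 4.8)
The plan is to show that the complete game state can be encoded in polynomial space and that one frame of simulation is computable in polynomial time, so that solvability reduces to reachability in an implicitly-defined configuration graph with exponentially many polynomial-size vertices. Such reachability lies in NPSPACE, and Savitch's theorem then gives membership in PSPACE. This is the generic ``polynomial state implies PSPACE'' template, and the work is entirely in verifying that each piece of state is genuinely polynomially bounded.

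First I would enumerate the components of a configuration and bound each one. Link's position is a fixed-point vector with a constant number of fractional bits inside a room of polynomially many tiles, so it ranges over a polynomially-describable set; the same holds for each metal orb, and there are only polynomially many orbs. Link's heart meter is bounded by the (polynomial) number of hearts, and the inventory here consists only of the magnetic gloves. The ledges and jump platforms are terrain features whose state, if any, is bounded per tile. Finally, because leaving a room may save tile and object states globally, and because save-and-quit records a checkpoint, I would also include, for every room, its saved object states together with the single saved/checkpoint configuration. Summing over the polynomially many rooms, the whole configuration remains of polynomial size.

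Next I would argue that the transition relation is poly-time computable. In each frame the player chooses one of a constant number of joypad/button settings, and the resulting configuration is obtained by updating Link's position subject to collisions, applying the magnetic force to all in-range orbs simultaneously while resolving the no-overlap constraint, and advancing any other object state. Both the infinite-range and finite-range variants only change how the set of affected orbs is computed, and all of this is straightforward polynomial-time bookkeeping on the polynomial-size configuration; room transitions, checkpoints, and save/quit are likewise deterministic updates on the stored state.

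With these two facts, membership in PSPACE follows from the standard argument. Solvability asks whether a goal configuration (Link in the final room) is reachable from the initial configuration in the configuration graph, which has at most $2^{\mathrm{poly}(n)}$ vertices. A nondeterministic machine can guess the frame inputs one at a time, keep only the current configuration plus a binary counter bounded by the number of configurations (hence of polynomial bit-length), and accept upon reaching a goal; since any reachable configuration is reachable without repeats, the counter suffices. This places the problem in NPSPACE, and by Savitch's theorem $\mathrm{NPSPACE} = \mathrm{PSPACE}$. The main obstacle is the bookkeeping in the first step: ensuring that \emph{all} persistent state---per-room saved objects, the checkpoint, and the saved-game configuration for save-and-quit---is accounted for, since these are precisely what keep the state polynomially bounded despite the game's save mechanics.
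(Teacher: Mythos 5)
Your proposal is correct and takes essentially the same approach as the paper: bound every varying quantity of the game state (fixed-point positions of Link and the metal orbs) by a polynomial, have a nondeterministic polynomial-space simulator guess the button presses frame by frame, and invoke Savitch's theorem to collapse NPSPACE to PSPACE. Your write-up is simply more detailed than the paper's (which is a few lines), adding the step counter for termination and the per-room saved state and checkpoint bookkeeping, but the underlying argument is identical.
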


\begin{GAMES}{|l|llll|}
    \hline
    \rowcolor{header}
	    \fitheader{Applicable Games} & \fitheader{Magnetic gloves} & \fitheader{Metal orbs} & \fitheader{Ledges} & \fitheader{Jump platforms} \\
    \hline
	    OoS & Magnetic gloves & Metal orbs & Ledges & Jump platforms \\
    \hline
\end{GAMES}

\begin{proof}
	Savitch's Theorem \cite{SAVITCH1970177} shows that PSPACE equals NPSPACE, so we give the following simple algorithm to show containment within NPSPACE.
	As rooms are specified by a polynomial number of tiles with pixel-resolution collision masks, 
	the only varying quantities in the game are player-controlled, polynomially bounded, fixed-point positions and velocities of Link and each metal orb, therefore a polynomial-space nondeterministic simulator could guess which buttons to press to find a path to the goal, if a path exists.
\end{proof}

\begin{theorem}\label{thm:magnet-inf}
	Generalized 2D Zelda with infinite-range magnetic gloves, metal orbs, ledges, and jump platforms is PSPACE-hard.
\end{theorem}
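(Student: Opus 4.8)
The plan is to reduce from reachability in \emph{planar} single-agent motion planning with \emph{door gadgets}, which is PSPACE-complete by the planar doors framework \cite{Gadgets_ITCS2020,Doors_FUN2020}. Because that source problem is already planar, it suffices to realize a single door gadget out of infinite-range magnetic gloves, metal orbs, ledges, and jump platforms, together with wires and one-way connections; no crossover gadget is needed. Combined with the PSPACE membership of Lemma~\ref{thm:magnet-pspace} (whose polynomial-state-space argument is insensitive to the range of the gloves), a correct door-gadget construction yields PSPACE-hardness, and in fact PSPACE-completeness.

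To build the door gadget I would encode its open/closed state in the position of a single metal orb confined to a dedicated horizontal track bounded by solid walls. Since the orb is north-polarized, Link pushes it away by projecting a north force and pulls it toward himself by projecting a south force, so from either end of the track he can drive the orb to a chosen stopping point. The \emph{traverse} path crosses the track over a pit at an intermediate column $X$: when the orb is parked at $X$ it serves as a jump platform that lets Link hop across, and when it is parked away from $X$ the pit is impassable, so the traverse path is open exactly when the orb sits at $X$. The \emph{open} path routes Link to a tile from which one glove activation drives the orb to $X$ (stopped there by a notch or ledge that also lets him later jump onto it), and the \emph{close} path routes him to a tile from which one activation drives the orb off $X$ into a parking spot, reopening the pit. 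Ledges, which Link can descend but not climb, make each of the three corridors one-way and mutually disjoint, so that entering on one port forces Link to leave on the matching port with no shortcut between paths.

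The main obstacle is the \emph{global} nature of the glove: its force has infinite range and simultaneously moves every metal orb lying along Link's beam, so a naive placement of many gadgets in one room would let an action intended for one door silently reposition the orbs of others. I would neutralize this by giving every door gadget its own horizontal row, so that an east/west activation is collinear with at most the intended orb, and by placing each orb's track in columns disjoint from the vertical wiring that connects gadgets. The delicate point is that a connecting wire must pass through other gadgets' rows, and while traversing it Link could face sideways and beam along a foreign row; I would prevent this by enclosing each orb track in solid walls that block the magnetic beam's line of sight except at the exact manipulation tiles, and, as a fallback for beams that do reach, by parking each idle orb flush against a wall so that any stray activation from a wire is a no-op. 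Checking that this isolation can always be realized in a planar embedding — that the three ports of every gadget can be wired and connected while each orb stays manipulable only from within its own gadget — is the crux of the argument and the step demanding the most care.

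Finally, I would assemble the realized door gadgets with the walls-and-ledges wiring into a direct simulation of an arbitrary planar door-gadget reachability instance, arranging that Link reaches the designated goal room if and only if the target location is reachable in the original motion-planning instance. Together with Lemma~\ref{thm:magnet-pspace}, this establishes that Generalized 2D Zelda with infinite-range magnetic gloves, metal orbs, ledges, and jump platforms is PSPACE-hard, and hence PSPACE-complete.
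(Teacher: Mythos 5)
Your high-level skeleton (reduce from door-gadget motion planning, encode the door state in the position of one metal orb, then worry about interference) matches the paper's, and your use of the \emph{planar} doors problem to avoid building a crossover is a legitimate simplification --- the paper instead reduces from the general doors problem of \cite{Nintendo_TCS} and builds a crossover from jump platforms (Figure~\ref{fig:magnet-goal-path}, right). However, your proposal fails at exactly the step that is the crux of this theorem: isolation of gadgets under an infinite-range force. In this paper's model (and in Oracle of Seasons) the magnetic gloves have \emph{no line-of-sight restriction}: the force acts ``at any distance, even when off-screen,'' through walls, on all orbs along the beam simultaneously; the paper attributes line-of-sight limits only to the Magnesis rune of Breath of the Wild. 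So your primary defense --- ``solid walls that block the magnetic beam's line of sight'' --- is not available in the model. Your fallback, parking each idle orb flush against a wall, does not work either: an orb flush against, say, the west end of its track is immune only to westward forces. Link standing anywhere in that row west of the gadget can repel it eastward (north polarity), and Link anywhere east of it can attract it eastward (south polarity); either way the orb slides along the track until your notch at $X$ catches it, i.e., a stray beam from a wire \emph{opens the door remotely}, and the reduction collapses. Since you concede that vertical wires must cross other gadgets' rows, this cheat is unavoidable in your layout. The paper's solution is the opposite of confinement: it accepts that every beam sweeps the entire room, and makes stray beams self-punishing. The goal is reachable only via a mandatory path surrounding the room, lined with metal orbs perched on ledges; any activation in a row or column, or with a polarity, other than the intended ones knocks such an orb down and \emph{permanently} blocks that path, while orbs are selectively removed from the rows/columns/polarities of intended activations.

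There is a second, smaller gap: your gadget does not enforce \emph{forced closing}. In the doors framework, traversing the close path must close the door; if the activation on the close path is optional, Link simply never closes anything and the instance trivializes. As described, your close path just ``routes him to a tile from which one activation drives the orb off $X$'' --- nothing compels that activation. The paper's gadget enforces it geometrically: the single central orb always blocks either the traverse path (closed state) or the close path (open state), so passing through the close path physically requires repelling the orb back into its traverse-blocking position (and attempting the wrong polarity drops a ledge orb that permanently blocks the traverse path). Your design could likely be patched into this shape, but the interference issue in the first paragraph is a wrong approach rather than a missing detail, so the proposal as written does not establish the theorem.
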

\begin{proof}
	We show PSPACE-hardness via reduction from motion planning with door gadgets \cite{Nintendo_TCS}.
	Figure~\ref{fig:magnet-door} shows our construction of a door gadget. In the center of the gadget is a metal orb that always blocks the traverse path (when closed) or the close path (when open). To open the door from the closed state, Link must be in the open path and repel the central metal orb with north magnetic force while facing down. To use the close path while in the open state, Link must use north magnetic force to repel the central metal orb while facing up. If Link tries to attract the central metal orb with south magnetic force, then one of the two ledge orbs will fall and permanently block the traverse path.
	
	In an effort to embed the graph into a single room, we must prevent Link from using the magnetic gloves to manipulate a metal orb inside a gadget from far away. This is solved by entirely surrounding the room with a path with metal orbs on ledges leading to the goal, as in Figure~\ref{fig:magnet-goal-path}. By selectively removing orbs (that would otherwise be dropped to block this path) in rows or columns which we intend the magnetic gloves to be used with a certain polarity, and placing our gadgets on disjoint sets of rows and columns, any unintended magnetic manipulations will permanently block the outer path and prevent the goal from being reached.
\end{proof}
	
\begin{figure}[t]
\centering
\includegraphics[scale=1.00]{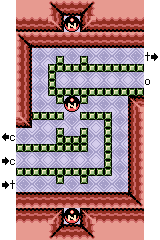}\hspace{4em}\includegraphics[scale=1.00]{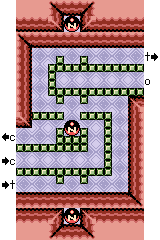}
\caption{Construction of a door gadget using metal orbs, in the closed (left) and open (right) configuration. The open, traverse, and close paths are marked with directions.}
\label{fig:magnet-door}
\end{figure}

\begin{figure}[t]
\centering
\includegraphics[scale=1.00]{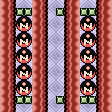}
\hspace{4em}
\includegraphics[scale=1.00]{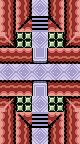}
\caption{(left) Path lined with metal orbs to prevent Link from using the magnetic gloves while facing perpendicular into the path. 
(right) Crossover using jump platforms.}
\label{fig:magnet-goal-path}
\end{figure}

\begin{theorem}
	Generalized 2D Zelda with at least 15-tile range magnetic gloves, metal orbs, ledges, and jump platforms is PSPACE-hard.
\end{theorem}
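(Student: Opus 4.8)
The plan is to reduce from motion planning with door gadgets \cite{Nintendo_TCS}, exactly as in the proof of Theorem~\ref{thm:magnet-inf}, reusing the metal-orb door gadget of Figure~\ref{fig:magnet-door} unchanged. The first point I would make is that this gadget is \emph{spatially local}: opening the door (repelling the central orb with north force while facing down from the open path) and closing it (repelling north while facing up from the close path) both require Link to stand immediately beside the central orb, and the traverse path is a single short corridor. Hence every intended operation involves orbs at distance at most a few tiles and remains available when the range is capped at $15$ instead of infinite. This gives the forward direction of the reduction for free: any solution to the door-gadget instance still drives a solution through the constructed dungeon.

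The substance of the argument is the reverse direction---showing that a bounded range grants Link no new, unintended way to move an orb, so that an unsolvable door-gadget instance produces an unsolvable dungeon. In Theorem~\ref{thm:magnet-inf} this was the sole purpose of the orb-lined outer path of Figure~\ref{fig:magnet-goal-path}, which punishes a wrong-polarity use anywhere along a shared row or column. A perimeter shield no longer suffices by itself once the range is finite, since a gadget in the interior of the room can lie more than $15$ tiles from the outer path, so misusing it would reach no protective orb. The fix I would use keeps the single-room embedding (thereby avoiding any assumption about whether orb state is saved across rooms) but strengthens the protection in two ways: (i) scale the layout so that distinct gadgets occupy disjoint bands of more than $15$ consecutive rows and columns, which by the range bound alone makes it impossible for Link to touch one gadget's orbs from within another gadget's region; and (ii) give each gadget its own \emph{local} ring of protective orbs on ledges, all lying within $15$ tiles of every tile from which that gadget's orbs are reachable, with orbs omitted only along the exact lines traversed by the sanctioned open and close operations. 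Then any reachable wrong-polarity use either exceeds the $15$-tile range and is a harmless no-op, or drops a nearby protective orb and permanently blocks the goal path, reproducing the infinite-range argument at local scale.

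The step I expect to be the main obstacle is the geometric bookkeeping behind~(ii): I must route the corridors connecting the gadget bands so that, from every corridor tile, the closest orb in each cardinal direction is either beyond $15$ tiles or is a protective orb that falls on misuse, while still leaving the intended open, traverse, and close positions clear in the one direction each of them needs. Verifying that these two demands---protect every reachable misuse, yet obstruct no intended use---can be met simultaneously is where the real work lies; in particular, I would need to check that the jump-platform crossovers of Figure~\ref{fig:magnet-goal-path} can be inserted without opening any new unprotected approach within $15$ tiles of a gadget's orbs. The remainder of the correctness argument is identical to that of Theorem~\ref{thm:magnet-inf}. Finally, PSPACE membership is inherited from Lemma~\ref{thm:magnet-pspace}, so together with this hardness the problem is PSPACE-complete, as recorded in Table~\ref{summary}.
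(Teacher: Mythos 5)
Your overall strategy is the paper's strategy: reduce from door-gadget motion planning as in Theorem~\ref{thm:magnet-inf}, exploit the finite range to lay gadgets out far apart so they become black boxes, and replace the single distant perimeter hallway by protection that is \emph{local} to each gadget, with the goal path threaded near every gadget so that any wrong-polarity use drops an orb that permanently blocks the goal. The membership claim via Lemma~\ref{thm:magnet-pspace} is also exactly what the paper does.

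However, the one step you defer---``the geometric bookkeeping'' of fitting local protection within the 15-tile range while leaving the intended open, traverse, and close moves unobstructed---is precisely where the paper does its work, and the paper's resolution contradicts your stated plan of reusing the gadget of Figure~\ref{fig:magnet-door} \emph{unchanged}. The authors found that the original gadget is not self-sufficient at finite range, and rather than surrounding it with a ring, they redesign it: the compact door of Figure~\ref{fig:magnet-door-finite} is only 11 tiles wide, exactly so that blocking hallways can run down \emph{both} sides of each gadget with their protective ledge orbs still within 15 tiles of the central orb (guarding horizontal manipulation; the gadget's own ledge orbs already guard vertical misuse). The goal hallway is then threaded past every gadget on both sides instead of around the room's perimeter. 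So your proposal is not wrong in outline, but it has a genuine gap: you assert without construction that a protective ring compatible with the unmodified gadget exists, whereas the evidence of the paper is that the gadget itself must be compacted to make the range constraint and the non-interference constraint simultaneously satisfiable. To complete your argument you would either have to exhibit such a ring for the original gadget (checking, as you note, the crossovers of Figure~\ref{fig:magnet-goal-path} too) or adopt the paper's compacted gadget, at which point the proofs coincide.
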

\begin{proof}
	Compared to infinite range, having a maximum force distance permits black-box gadget constructions, as we prevent external interference by laying-out gadgets far apart in the dungeon.
	However, the construction in Theorem~\ref{thm:magnet-inf} is not self-sufficient because we protected the central metal orb from the left or right by using a single, distant hallway with orbs poised to block traversal to the goal.
	
	We bring these two aspects together by compacting the door gadget enough to run blocking hallways on both sides, as shown in Figure~\ref{fig:magnet-door-finite}. With this 11-tile-wide construction, the metal orbs above the hallways can be placed within the 15-tile limit to protect against horizontal magnetic glove usage on the central metal orb.
	Rather than running the goal hallway around the outside of the room, we thread it past every gadget on both sides, completing the reduction.
\end{proof}
	
\begin{figure}[htp]
\centering
\includegraphics[scale=1.00]{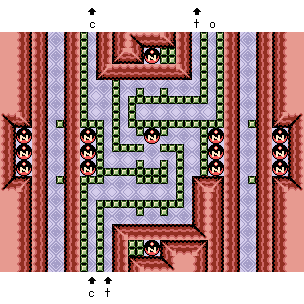}
\caption{Compact construction of a door for 15-tile range magnetic gloves, in the closed state. Hallways on the left and right are traversed at the end to reach the goal.}
\label{fig:magnet-door-finite}
\end{figure}

\subsection{Cane of Pacci is PSPACE-hard}

The Cane of Pacci is an item introduced in The Legend of Zelda: The Minish Cap, a 2D game, that shoots a bolt of magic that can enchant a circular hole tile, which will launch Link up an adjacent ledge if he enters the hole.
As a pseudo-3D effect, the bolt ignores hole tiles that are not ``vertically aligned'' with Link's feet: if the bolt travels down a ledge, then the bolt will remember that it is now high above the floor. The bolt also ignores already-enchanted holes. In the game, the hole stays enchanted for a significant but limited time, so we consider both the finite- and infinite-duration generalizations.
    
\begin{theorem}\label{thm:CaneOfPacciFPT}
	Generalized 2D Zelda with fixed-duration Cane of Pacci, ground holes, ledges, and tunnels is fixed-parameter tractable with respect to cane duration.
\end{theorem}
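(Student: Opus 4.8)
The plan is to decide reachability by searching a configuration graph, where the whole point is to compress the part of the state that records which holes are currently enchanted so that it depends only on the cane duration $k$ and not on the number of holes. As in the simulation underlying Lemma~\ref{thm:magnet-pspace}, the only time-varying data are Link's bounded fixed-point position and facing (polynomially many values, since tiles span a constant number of pixels) together with, for each ground hole, the number of frames left on its enchantment, an integer in $\{0,1,\dots,k\}$. Recording the timer of \emph{every} hole would make the state space exponential in the number $H$ of holes, which is only an XP bound of the form $n^{O(k)}$; the entire difficulty is to show that at any instant only $f(k)$ holes have timers we must remember.

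The heart of the argument is a locality/monotonicity lemma. Call an enchanted hole \emph{live} if Link can still walk to it and enter it before its timer expires. Since Link moves at bounded speed (one pixel per frame, with tiles spanning a constant number of pixels), during the at most $k$ remaining frames of an enchantment he can reach only tiles within $O(k)$ tiles of his current position, so every live hole lies in an $O(k)\times O(k)$ window around Link and there are at most $O(k^2)$ of them. A hole that is enchanted far from Link---for instance by a bolt that travels a long way---is simply never live and may be ignored. Moreover, writing $d$ for Link's travel-distance to a hole and $r$ for its remaining duration, each frame decreases $r$ by $1$ and $d$ by at most $1$, so $d-r$ is nondecreasing: once a hole stops being live it never becomes live again. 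Hence a search may \emph{forget} the timer of any hole that leaves the window without losing any solution, and the only holes it must ever track are the live ones. (As a sanity check, at most $k$ holes are enchanted at once, since Link fires at most one bolt per frame and each enchantment lasts $k$ frames.)

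With this in hand the algorithm is immediate. Define a \emph{reduced configuration} to be Link's discrete position and facing together with the remaining timers of the live holes in his current window; the number of such configurations is at most $\mathrm{poly}(n)\cdot(k+1)^{O(k^2)}=f(k)\cdot\mathrm{poly}(n)$. The per-frame transition has constant out-degree (a constant number of button combinations); it decrements each recorded timer and drops any that reach zero or whose hole Link can no longer reach in time, it adds a timer of $k$ when Link fires the cane at a hole reachable within $k$ frames (ignoring unreachable, hence useless, targets), and it otherwise just advances Link (walking, descending a ledge for free, traversing a tunnel, or being launched up a ledge upon entering a live hole). By the monotonicity lemma no previously-untracked hole ever needs to be inserted except a freshly enchanted one, so the bookkeeping is purely local and sound. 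Breadth-first search from the start configuration to any configuration with Link in the goal room then decides the instance in $f(k)\cdot\mathrm{poly}(n)$ time, establishing fixed-parameter tractability.

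The step I expect to be the main obstacle is pinning down the window bound in the presence of tunnels and ledges, i.e.\ making ``reachable within $r$ frames'' precise enough that the live holes really are confined to a set of size $O(k^2)$. For ordinary walking and for ledges (which Link descends for free and can ascend only through an enchanted hole) the bounded-speed estimate is clean, but tunnels must be treated as bounded-length local passages rather than long-range teleporters; otherwise a single tunnel hub could make $\Theta(H)$ holes simultaneously live, the window would blow up to $\mathrm{poly}(n)$, and the argument would degrade back to the XP bound $n^{O(k)}$. Indeed, this is exactly the freedom that the companion infinite-duration Theorem~\ref{thm:CaneOfPacciPSPACE} exploits for PSPACE-hardness, so the crux is to verify that the Minish Cap tunnel mechanic has bounded traversal cost and that the sliding-window bookkeeping across frames is correct, making the reduction to a bounded-state search go through.
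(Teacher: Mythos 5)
Your proposal takes essentially the same approach as the paper's proof: both use Link's bounded speed to argue that only holes within $O(t)$ tiles (hence $O(t^2)$ holes) can ever be both enchanted and used, so that efficient play yields a configuration graph of size $n\,(t+1)^{O(t^2)}$ that can be searched in FPT time. Your explicit monotonicity lemma (that a hole which ceases to be ``live'' never becomes live again) and your tunnel caveat merely make rigorous what the paper's ``without loss of generality'' step leaves implicit, and the caveat resolves positively since Minish Cap tunnels are crawled through at bounded speed rather than acting as teleporters.
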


\begin{GAMES}{|l|llll|}
    \hline
    \rowcolor{header}
	    \fitheader{Applicable Games} & \fitheader{Cane of Pacci (fixed-dur.)} & \fitheader{Ground holes} & \fitheader{Ledges} & \fitheader{Tunnels} \\
    \hline
	    TMC & Cane of Pacci (fixed-dur.) & Ground holes & Ledges & Tunnels \\
    \hline
\end{GAMES}

\begin{proof}
	Let the Cane of Pacci enchant holes for $t$ frames before they automatically unenchant, and let Link's running speed be at most $v \leq 1$ tiles per frame, which is slower than the bolt's travel speed $u$.
	
	For Link to use an enchanted hole, he must be within a circle of radius $v t$ tiles centered at the hole from the duration of the enchantment. Symmetrically, all holes that are beyond $v t$ tiles from his location cannot be enchanted and used, so without loss of generality no strategy for beating the dungeon ever has more than $h = O(v^2 t^2) = O(t^2)$ holes that are enchanted at any point. 
	
	Supposing that there are $n$ square tiles in the world and Link moves at a speed of $1$ pixel per frame, he can be at $O(n/v^2)$ possible positions.
	Link can fire at most one bolt per frame, and each bolt that enchants a reachable hole travels for at most $vt/u < t$ frames.
	Under efficient play, where bolts are only ever shot at reachable holes,
	the total number of game configurations would be $O(n/v^2 \times h t \times (t+1)^h) = n \, (t+1)^{O(t^2)}$.
	
	Therefore, we can create a graph in linear time for fixed $t$, where each node is such a configuration of enchanted holes and to-be-enchanted holes around Link's location, connected by edges representing the effects of possible player inputs on the next frame: Link moving, Link shooting a bolt at a hole in view, or a bolt enchanting a hole. There will be a strategy to get to the end of the dungeon if and only if this graph has a path from the starting configuration node and an ending configuration node.
\end{proof}
    
\begin{theorem}\label{thm:CaneOfPacciPSPACE}
	Generalized 2D Zelda with infinite-duration Cane of Pacci, ground holes, ledges, and tunnels is PSPACE-complete.
\end{theorem}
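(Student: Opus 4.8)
The plan is to prove the two directions separately: membership in PSPACE and PSPACE-hardness.

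For membership, I would mirror the argument of Lemma~\ref{thm:magnet-pspace}. The full game configuration consists of Link's fixed-point position and velocity (polynomially bounded), the state of any cane bolt currently in flight (its position and its remembered height after traveling down ledges), and which holes are currently enchanted. The last of these is a single bit per hole, so the entire configuration fits in polynomial space even though enchantments never expire. A nondeterministic simulator can therefore guess the button presses one frame at a time and verify a path to the goal using only polynomial space; since PSPACE $=$ NPSPACE by Savitch's Theorem \cite{SAVITCH1970177}, this gives containment in PSPACE.

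For hardness, I would reduce from reachability in the motion-planning-through-gadgets framework \cite{Toggles_FUN2018,Gadgets_ITCS2020}, using the self-closing door gadget of \cite{Doors_FUN2020}, whose reachability problem is PSPACE-complete. The core of the reduction is to realize one self-closing door from the Cane of Pacci mechanic together with ground holes, ledges, and tunnels. The natural encoding is: the door's open state corresponds to a distinguished internal hole being enchanted; the open path lets Link reach the unique position from which a cane bolt, traveling at his current height, is vertically aligned with that hole and so enchants it; and the traverse path sends Link into the enchanted hole so that he is launched up the adjacent ledge to the exit. I would exploit the pseudo-3D alignment rule — that a bolt traveling down a ledge is remembered as high above the lower floor and skips holes there, and that already-enchanted holes are ignored — to guarantee that the only way to operate the door is through its intended ports, and in particular that stray bolts fired from elsewhere in the dungeon cannot reach or re-enchant the internal hole. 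Tunnels would then be used to route the open and traverse connections and to build crossovers, exactly as in the preceding constructions, so that the planar gadget graph of the target instance can be embedded.

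The main obstacle is obtaining genuine state reversibility — the self-closing behavior — from an enchantment that, by assumption, never times out. If an enchanted hole simply remained a permanent, reusable up-launcher, then the set of enchanted holes would grow only monotonically, the reachable region could be computed by a monotone fixed-point in polynomial time, and no hardness could hold. The crux of the gadget design is therefore to arrange the ledges and the hole's approach geometry so that traversal is effectively single-use: passing through must discharge the open state, so that re-traversal forces Link to pay the open path again, matching the self-closing door semantics. I would establish this by a careful case analysis over Link's reachable positions and over the admissible bolt trajectories inside the gadget, checking that after one launch the configuration returns to ``closed'' from the gadget's external point of view and that Link cannot smuggle an extra traversal by re-entering from the top. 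Verifying this single-use / self-closing property, rather than the routing or the crossover, is where I expect essentially all of the difficulty to lie.
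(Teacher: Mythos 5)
Your membership argument, your choice of target problem (motion planning with the self-closing doors of \cite{Doors_FUN2020}), your encoding of the open state as an enchanted internal hole, your use of the pseudo-3D alignment rule to protect the gadget from stray bolts, and your tunnel-based crossovers all match the paper's proof. But there is a genuine gap at exactly the point you flag as the crux. You propose to obtain the self-closing behavior by arranging ledges and approach geometry so that a traversal ``discharges'' the open state, to be verified by case analysis over Link's reachable positions and admissible bolt trajectories. No such analysis can succeed, by your own monotonicity observation: the gadget's internal state is precisely the set of enchanted holes, and if enchantments are never removed, that set is nondecreasing over time and is unchanged by Link walking, hopping, or being launched. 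Geometry can restrict which bolts reach which holes; it cannot un-enchant a hole, so no arrangement of ledges returns the gadget to the closed state after a traversal.

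The missing ingredient is a game mechanic, not a construction: when Link enters an enchanted hole and is launched up the adjacent ledge, the enchantment is consumed. The ``infinite-duration'' hypothesis in the theorem removes only the timer-based expiry, not this consumption-on-use. The paper's self-closing door (Figure~\ref{fig:pacci-door}) relies on exactly this: Link opens the door by firing a bolt over a stone barrier at the holes below a ledge; to traverse, he hops from hole to hole, and the last hole launches him up the ledge, disabling the enchantment and thus closing the door behind him. Meanwhile the surrounding walls, and the fact that a bolt fired from the top of a ledge does not act on holes at lower height, ensure the door can be opened only from its open path. With the consumption mechanic in hand, the remainder of your outline goes through essentially as in the paper; without it, the reduction collapses for precisely the reason you identify.
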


\begin{GAMES}{|l|llll|}
    \hline
    \rowcolor{header}
	    \fitheader{Applicable Games} & \fitheader{Cane of Pacci ($\infty$-dur.)} & \fitheader{Ground holes} & \fitheader{Ledges} & \fitheader{Tunnels} \\
    \hline
	    TMC & Cane of Pacci ($\infty$-dur.) & Ground holes & Ledges & Tunnels \\
    \hline
\end{GAMES}

\begin{proof}
	Membership in PSPACE is shown similar to Lemma~\ref{thm:magnet-pspace}: given a dungeon as input, there are only polynomially many, polynomial-sized varying quantities in the game state, so if there is a sequence of button presses which brings Link to the goal, then a nondeterministic algorithm can guess and verify them in polynomial space.
	To show PSPACE-hardness, we reduce from motion planning with self-closing doors \cite{Doors_FUN2020}.
	Figure~\ref{fig:pacci-door} shows our design for a self-closing door gadget. Link opens the door by entering the open path and firing the Cane of Pacci over the stone barrier at the hole below the ledge. When open, Link can later traverse by hopping from hole to hole, and the last hole will launch Link up the ledge, disabling the enchantment and thus closing the door behind him. The walls surrounding the holes, and the fact that the cane's bolt does not travel down to lower height levels when shot from the top of a ledge, prevent Link from opening the door anywhere except the open path.
	Because the enchantment does not have a finite duration, Link may be required to open a door but not return to use the door for an arbitrarily long time.
	
	To lay out the graph of self-closing door gadgets in the game, we can make use of the crossover gadget, also shown in Figure~\ref{fig:pacci-door}, if the graph is not planar. Link can freely travel north or south on the upper level, and another path may run left and right by going down stairs and using a tunnel on the lower level.
\end{proof}

\begin{figure}[htp]
\centering
\includegraphics[scale=1.00]{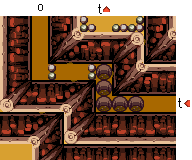} 

$ $

\includegraphics[scale=1.00]{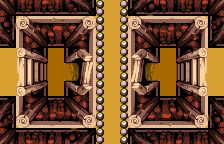}
\caption{Gadgets in The Minish Cap: a self-closing door using holes for the Cane of Pacci (top) and a crossover using tunnels (bottom).}
\label{fig:pacci-door}
\end{figure}

\subsection{Magnesis Rune is PSPACE-Hard}

In The Legend of Zelda: Breath of the Wild, a 3D game, Link obtains the multi-purpose Sheikah Slate, a tool that can be equipped with magical abilities called Runes. Among them is the Magnesis rune, which grants Link telekinetic power over metallic objects within a fixed distance. Compared to the magnetic gloves described in Section~\ref{sec:magnet-gloves}, Magnesis provides full 3D control of exactly one targeted metal object in a world with more-advanced simulated physics, although Link cannot target objects that are out of his line-of-sight or that he is standing on.\footnote{This mechanic intends to prohibit using Magnesis to fly by riding the object being controlled, but there is a glitch that involves stacking multiple specific metal objects to build a ``flying machine'' \cite{flyingmachine}. Our constructions place only a single metal object in a room so that flight cannot be achieved.}
    
\begin{theorem}\label{thm:MagnesisPSPACE}
	Generalized 3D Zelda with the Magnesis rune and large metal plates is PSPACE-hard.
\end{theorem}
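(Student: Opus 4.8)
The plan is to establish PSPACE-hardness by reduction from motion planning with door gadgets \cite{Nintendo_TCS}, closely mirroring the magnetic-gloves reduction of Theorem~\ref{thm:magnet-inf}. Membership in PSPACE follows exactly as in Lemma~\ref{thm:magnet-pspace}: the only varying state is the bounded fixed-point 3D position and orientation of Link together with that of the single metal plate, so a polynomial-space nondeterministic simulator can guess and verify a winning input sequence, and Savitch's Theorem \cite{SAVITCH1970177} then removes the nondeterminism, giving the PSPACE-completeness recorded in Table~\ref{summary}.

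For the hardness construction I would build a door gadget around a single large metal plate that acts as a movable barrier. In the closed configuration the plate rests stably (held by gravity in a slot or against a lip) so as to block the traverse corridor; Link standing in the open path targets the plate with Magnesis and slides it into an alcove, clearing the corridor, while from the close path he slides it back. The structural constraints Magnesis imposes---that Link cannot target an object out of his line of sight, nor one he is standing on---are exactly what localize each gadget: I would wall off the plate so that it is visible only from the open and close paths, never from the traverse path, forcing the three paths to behave as a genuine door. Because Magnesis controls \emph{exactly one} object at a time, each gadget carries its own plate and the gadgets are spatially separated by opaque walls, so no plate is ever in line of sight from a foreign gadget; placing a single metal object per room simultaneously rules out the ``flying machine'' exploit noted in the footnote. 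Non-planar instances are handled by a crossover that routes one path above another at a different height, which is straightforward in the 3D setting.

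The main obstacle I anticipate is taming the sheer power of full 3D telekinetic control together with realistic physics. Unlike the magnetic gloves, which exert only a directional push or pull and so move orbs along axis-aligned tracks, Magnesis lets Link place the plate anywhere in a continuum of positions and orientations subject to collisions and gravity. The delicate design work is therefore to (i) guarantee that each intended configuration of the plate is a stable resting state, so that the door ``remembers'' whether it is open or closed once Link releases control, and (ii) guarantee that no unintended placement lets Link bypass a barrier, prop the plate up to climb over a wall, or otherwise reach a path the gadget is meant to forbid---all while respecting that the plate is large and rigid. Getting the geometry of slots, ledges, and sight-blocking walls right, so that the only stable and reachable plate positions correspond precisely to ``open'' and ``closed,'' is where the real content of the reduction lies.
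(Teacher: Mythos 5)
There is a genuine gap at the heart of your gadget design. You encode the door's state as ``plate blocking the traverse corridor'' versus ``plate in an alcove,'' and you rely on line-of-sight walls to make the plate untargetable from the traverse path. But these two requirements are incompatible: a plate that physically obstructs the traverse corridor is, by definition, an object that Link walking along that corridor comes face to face with, so it is necessarily within his line of sight at the moment it blocks him. Any wall that hid the plate from the corridor would itself have to obstruct the corridor, making the plate redundant. Consequently, in your gadget Link can simply stand in the traverse path, target the blocking plate with Magnesis, and slide it aside---the ``door'' can be opened from its own traverse path, and the reduction collapses. This is not one of the delicate physics issues you flag in your last paragraph (stability, unintended stable placements); it is a structural flaw in using the plate as a \emph{barrier} at all, given that Magnesis's only targeting restrictions are line of sight and standing-on.

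The paper's proof inverts the role of the plate to avoid exactly this problem: the plate is an \emph{enabler}, not a blocker. The traverse line is a sequence of platforms over pits with two gaps, and traversal is possible only if the plate can be fetched and laid down as a bridge; the ``closed'' state is the \emph{absence} of the plate within Magnesis's fixed \emph{range} of the traverse-line entrance, so there is nothing in the traverse path to target when the door is closed. The state variable is thus distance (Magnesis's finite range), not visibility, and the open line is raised above the traverse line---near enough to manipulate the plate, too far to use it to cross between paths. This choice also yields the closing mechanism for free: carrying the plate from the first gap to the second moves it out of range of the entrance, so the door closes automatically upon traversal. Accordingly, the paper reduces from motion planning with \emph{self-closing} door gadgets \cite{Doors_FUN2020} rather than the three-path open/traverse/close doors of \cite{Nintendo_TCS} that you target, which spares it from having to build a working close path at all. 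If you want to salvage a barrier-style design, you would need to exploit the ``cannot target an object you are standing on'' rule (e.g., a plate covering a hole Link must descend through), but as written your construction does not work.
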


\begin{GAMES}{|l|ll|}
    \hline
    \rowcolor{header}
	    \fitheader{Applicable Games} & \fitheader{Magnesis ability} & \fitheader{Large metal plates} \\
    \hline
	    BotW \SETTING{Great Plateau} & Magnesis Rune & Large metal plates \\
    \hline
\end{GAMES}

\begin{proof}
	We reduce from motion planning with self-closing doors \cite{Doors_FUN2020}, using the gadget illustrated in Figure~\ref{fig:magnesis-door}.
	Within a closed room, we construct two paths of platforms over pits: the traverse line, with two gaps that can only be crossed by placing a large metal plate as a bridge, and the open line, raised above the first close enough to use Magnesis on the plate but too far to use it as a bridge to cross paths. Both paths connect to the outside with small exit doors to keep the large metal plate inside.
	
	The self-closing door starts closed, where the large metal plate is not within Magnesis reach of the start of the traverse line.
	To open the door, Link must use Magnesis from the open line to relocate the plate so that when Link later enters the traverse line, he can use the plate as a bridge across both gaps. Carrying the plate from the first gap to the second gap puts it out of Magnesis range of the entrance of the traverse line, which closes the door upon traversal.
\end{proof}
\begin{figure}[htp]
\centering
\includegraphics[scale=1.00]{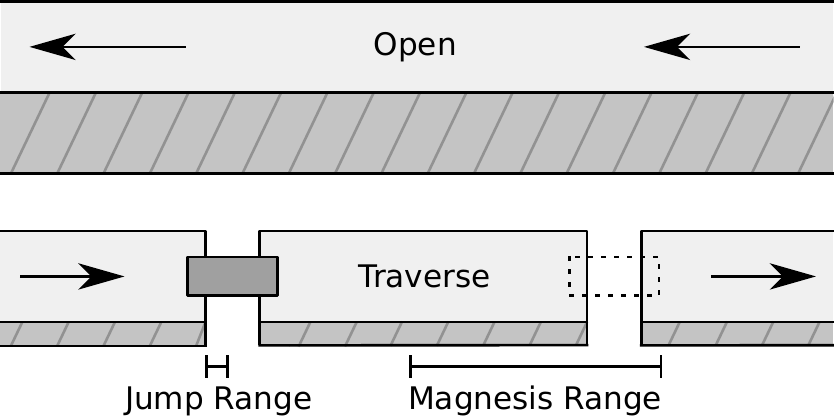}
\caption{Construction of a door gadget using a large metal plate and platforms over pits, shown in the open state. The open line is raised above the traverse line. The layout was inspired by a puzzle in the Oman Au Shrine where the Magnesis rune is unlocked in The Legend of Zelda: Breath of the Wild.}
\label{fig:magnesis-door}
\end{figure}

\subsection{Statues and Pressure Plates are PSPACE-complete}
\label{sec:switches}

Many games in the Legend of Zelda series include dungeon puzzles where, for example, a statue needs to be pushed onto a pressure plate that opens a nearby door as long as the it is pressed down, but the statue can be pushed off of the pressure plate to be used for another use. More recently, in The Legend of Zelda: Breath of the Wild, many shrines contain Ancient Orbs, which can be carried around and placed in Ancient Pedestals to activate other nearby ancient technology.

These statues and orbs act as temporary, reusable keys that exists as objects in the world, as opposed to a collected inventory item like a Small Key that is permanently consumed to open a locked door. As long as we also have barriers to prevent the arbitrary transportation of these objects, this type of puzzle mechanic is PSPACE-hard to solve.

We use the doors-and-buttons framework from \cite{Forisek10, HardGames12}. In these problems the world contains \emph{buttons} which are connected to \emph{doors}. Doors can be open or closed, preventing movement when they are closed. Pressing a button can change whether its doors are open or closed. A version of this, which we call 1-switch-2-doors, in which each button is connected to two doors and pressing the button swaps the openness of both doors was shown to be PSPACE-complete in \cite{van2015pspace}.
    
\begin{theorem}\label{thm:statue-plates}
	Generalized 2D Zelda with pushable heavy statues, pressure plates, doors, and stairs is PSPACE-hard.
\end{theorem}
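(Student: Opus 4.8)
The plan is to reduce from the 1-switch-2-doors problem, which the excerpt tells us is PSPACE-complete \cite{van2015pspace}, using the doors-and-buttons framework of \cite{Forisek10, HardGames12}. In that framework I have a collection of buttons, each wired to exactly two doors, and pressing a button simultaneously toggles the open/closed state of both its doors; Link must navigate from a start location to a goal while respecting which doors are currently passable. My job is to realize the three primitive components — a \emph{door} (a passage that is blocked or open depending on a bit of state), a \emph{button} that flips a designated pair of such bits, and the \emph{wiring/connectivity} that lets Link walk between buttons and doors — entirely out of pushable heavy statues, pressure plates, door obstacles, and stairs, and to do so in a planar layout so the whole gadget graph embeds in a 2D dungeon.

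First I would build the basic state-holding element: a pressure plate together with a heavy statue that can rest either \emph{on} the plate or on an adjacent \emph{off} tile, with short fences or walls arranged so that the statue can only ever be pushed between these two designated positions and cannot be dragged away to some other part of the dungeon. While the statue sits on the plate, the plate is depressed and its associated door obstacle is held open; pushing the statue off closes the door. This gives me a reusable, in-world bit of state — exactly the ``temporary, reusable key'' behavior the preamble emphasizes, in contrast to a consumable Small Key. Next I would implement the \emph{button that toggles two doors at once}: here I would exploit that a single pressure plate can be wired to control two separate door obstacles elsewhere, and arrange the geometry (using stairs to route paths over or under one another for the needed crossovers, and fences to force the statue's motion to be a clean binary toggle) so that the single act of pushing the statue onto, or off of, the plate flips both associated passages in the coordinated swap pattern required by 1-switch-2-doors. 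The stairs are the natural crossover primitive in this tile-based setting, letting two corridors pass without interfering, which is what lets me embed a nonplanar wiring diagram.

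The key correctness claim is a bijection between Link's reachable game configurations and the configurations of the 1-switch-2-doors instance: each statue-on-plate bit corresponds to a switch state, pushing a statue corresponds to pressing the associated button, and Link can walk through a modeled door exactly when the corresponding target door is open. I would argue that the fences/walls make the gadgets \emph{black-box} — Link cannot cheat by pushing a statue into an unintended position, cannot open or close a door except through the sanctioned button-press action, and cannot bypass a closed door — so that solution paths in the dungeon correspond exactly to solution paths in the abstract instance, in both directions. Membership in PSPACE for this setting can be argued exactly as in Lemma~\ref{thm:magnet-pspace}: the only varying state is Link's bounded position together with the finitely many statue positions (each a bounded quantity), so a nondeterministic polynomial-space simulator guesses and verifies a winning button sequence, and PSPACE${}={}$NPSPACE by Savitch's Theorem \cite{SAVITCH1970177}.

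The hard part will be enforcing the black-box behavior of the gadgets against the surprisingly flexible physics of pushing heavy statues around a grid: I must guarantee that Link can never maneuver a statue off its intended track, never use a statue to permanently wedge open or block a passage in an unintended way, and never activate or deactivate a plate through any route except the designated button-press corridor. Getting the fence/wall geometry right so that each statue has exactly two legal positions, and that the coordinated two-door swap is forced rather than decomposable into independent single-door toggles, is the crux; I would devote the figure and the bulk of the argument to verifying these invariants gadget by gadget, and to checking that the stairs-based crossover truly isolates the two crossing corridors so no statue or spurious traversal leaks between them.
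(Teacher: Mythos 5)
Your overall plan matches the paper's: both reduce from the 1-switch-2-doors problem of \cite{van2015pspace}. But your central gadget has a genuine flaw. You realize each switch as a \emph{single} pressure plate wired to two doors, with ``press'' meaning pushing the statue onto or off that plate. The mechanic available here is that a plate holds a door open \emph{as long as it is pressed}; with that polarity, one plate wired to two doors makes the doors \emph{coordinated} (both open when pressed, both closed otherwise), which is not the 1-switch-2-doors semantics. The cited gadget requires the two doors to \emph{swap}, i.e., to remain anti-coordinated (one open exactly when the other is closed). To get that from a single plate you would need inverted wiring --- a door that is open precisely when the plate is \emph{not} pressed --- and no such mechanic is granted by the theorem. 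The paper sidesteps this exactly where your construction breaks: each gadget has \emph{one statue and two plates}, each plate opening one door, so the anti-coordination is enforced not by wiring but by the uniqueness of the statue --- with only one statue, at most one of the two plates can be depressed at a time, and moving the statue from one plate to the other is the ``swap.'' (The intermediate state with the statue on neither plate only closes both doors, which never helps Link, so reachability is preserved.)

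There is a second, more mechanical gap in your confinement argument. You propose fences/walls to restrict the statue to its two legal positions, but fences and walls block Link as well as statues, and pushing requires Link to stand on the tile behind the statue. So every tile from which Link pushes the statue is, symmetrically, a tile onto which the statue can be pushed from the opposite side; purely wall-based geometry cannot create a ``push-from but never push-onto'' tile. What you need is a barrier that is Link-passable but statue-impassable, and that is precisely the role of \emph{stairs} in the paper's proof: the statue partition's entrance is raised by stairs, so Link can enter and leave freely while the statue can never be pushed out of its partition. You mention stairs only as a crossover primitive; in the paper they are the confinement primitive, and without them (or an equivalent asymmetric barrier) your black-box claim does not go through.
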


\begin{GAMES}{|L{5cm}|llll|}
    \hline
    \rowcolor{header}
	    \fitheader{Applicable Games} & \fitheader{Pushable Statue} & \fitheader{Pressure Plate} & \fitheader{Door} & \fitheader{Stairs} \\
    \hline
	    ALttP, OoA, OoS, FS, TWW, FSA, TMC, ALBW & Statue & Floor Button & Door & Stairs \\
	    OoT, MM & Wooden Box & Small Pressure Plate & Door & Stairs \\
        TP \SETTING{Temple of Time} & Pot & Pressure Plate & Door & Stairs \\
        PH, ST, SS \SETTING{Pirate Stronghold} & Metal Box & Floor Button & Door & Stairs \\
	    BotW (see Corollary~\ref{thm:OrbsPSPACE}) & Ancient Orb & Ancient Pedestal & Door & Ladder \\
    \hline
\end{GAMES}

\begin{proof}
	We reduce from motion planning with 1-switch-2-doors gadgets \cite{van2015pspace}. Shown in Figure~\ref{fig:pressure-plates-gadget}, the gadget consists of a tri-partitioned room, one part with one statue and two pressure plates, each controlling a door in the other two parts.
	The statue partition's entrance is raised up from the floor by stairs, preventing Link from pushing the statue outside, and the other two parts each have two entrances on opposite sides of their inner door. Link may choose to open either door by pushing the statue onto the corresponding pressure plate, but with the one statue, at most one door can be opened at a time.
\end{proof}

\begin{corollary}\label{thm:OrbsPSPACE}
	Generalized 3D Zelda with Ancient Orbs, Pedestals, and Doors, along with ladders, is PSPACE-hard.
\end{corollary}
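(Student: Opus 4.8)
The plan is to obtain this corollary as an almost immediate consequence of the proof of Theorem~\ref{thm:statue-plates}, reusing the same reduction from motion planning with 1-switch-2-doors gadgets \cite{van2015pspace} and replacing the 2D pushable-statue-and-pressure-plate mechanics with their 3D Breath of the Wild analogues. The correspondence is direct: an Ancient Orb plays the role of the heavy statue, and an Ancient Pedestal plays the role of a pressure plate. Just as pushing a statue onto a plate keeps a door open only while the statue rests there, placing the orb into a pedestal keeps the connected door open only while the orb remains seated. The crucial resource constraint is identical in both settings: there is exactly one orb (one statue), so at most one of the two pedestals (plates) can be occupied at any moment, and hence at most one of the two controlled doors can be open simultaneously. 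This is precisely the behavior required of the 1-switch-2-doors gadget.

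Concretely, I would build the same tri-partitioned room as in the proof of Theorem~\ref{thm:statue-plates} (Figure~\ref{fig:pressure-plates-gadget}): one partition holds the single orb together with the two pedestals, and the other two partitions each contain one door with entrances on opposite sides. Link enters the orb partition, carries the orb to whichever pedestal he wishes to activate, and thereby opens the corresponding door; to switch which door is open he must return, retrieve the orb, and reseat it in the other pedestal. The external connectivity of the three partitions is wired exactly as the gadget interface demands, so a solving trajectory of the resulting dungeon corresponds to a traversal of the 1-switch-2-doors instance and vice versa.

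The one point that genuinely needs 3D-specific care is confining the orb to its partition, since orbs are \emph{carried} rather than pushed and Link enjoys freer movement in the 3D model. In the 2D statue version, a flight of stairs at the partition boundary prevented the statue from being pushed out; here the analogous device is a ladder, which I would exploit with the fact that Link cannot climb a ladder while holding the orb (both hands are occupied), so any attempt to leave the partition forces him to set the orb down inside. I would therefore place the orb partition's only exits behind ladders, guaranteeing the orb can never escape to be used elsewhere or to jam the wrong door.

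The main obstacle I anticipate is ruling out the richer set of ``cheating'' moves available to a carried object in 3D: in particular, Link might try to \emph{throw} the orb over a ladder barrier or a partition wall, or drop it through a gap. The fix is geometric—I would make the confining walls tall enough and the bounding geometry tight enough that no throw arc clears a boundary, analogous to how the bomb-throw barriers are dimensioned in Theorem~\ref{thm:GuardianNP}. Once containment is established, correctness follows from the statue argument verbatim. Membership in PSPACE follows from the standard nondeterministic polynomial-space simulation already used in Lemma~\ref{thm:magnet-pspace}, since the only unbounded-looking state is the polynomially bounded position of Link and the single orb, giving PSPACE-completeness overall.
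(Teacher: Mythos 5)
Your proposal is correct and follows essentially the same route as the paper: reuse the Theorem~\ref{thm:statue-plates} construction with Ancient Orbs for statues, Ancient Pedestals for pressure plates, and ladders for stairs, relying on the fact that Link cannot carry an orb while climbing a ladder to confine it to its partition. Your additional attention to 3D-specific cheats (throwing the orb over barriers) goes slightly beyond the paper's one-line argument but does not change the approach.
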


\begin{proof}
	We use the same construction as Theorem~\ref{thm:statue-plates}, replacing statues with Ancient Orbs, pressure plates with Ancient Pedestals, and short steps with ladders. Link is unable to carry an orb while climbing up ladders.
\end{proof}

\begin{figure}[htp]
\centering
\includegraphics[scale=1.00]{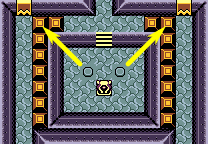}
\caption{Construction for Theorem~\ref{thm:statue-plates}, in Oracle of Ages. The floor buttons open the corresponding shutter doors (signified with arrows) when the statue is pushed on them.}
\label{fig:pressure-plates-gadget}
\end{figure}

\subsection{Minecarts Navigation} \label{sec:Minecarts}

Minecarts are an environmental feature appearing in a variety of Zelda games which pose unique navigational challenges.
In these games, Link can ride minecarts along paths of minecart tracks fixed onto the ground (or raised in the air), ending at minecart stops.
Tracks may also pass through special doors which only open to let a minecart through.
Levers can be used to change the state of sections of track, which can open new paths or create dead-ends that reflect the minecart backwards.
We give PSPACE-completeness proofs that address the types of minecarts found in The Minish Cap, Oracle of Ages, and Oracle of Seasons.

In The Legend of Zelda: Oracle of Ages and Seasons, Link can ride on minecarts which automatically transport him slowly along a track to a destination with no control during the ride beyond the use of some items, such as the sword or bombs. There are also levers Link can switch to rotate certain sections of track $90^\circ$ to toggle the available path through a T-junction. In The Legend of Zelda: The Minish Cap, the dungeon Cave of Flames has fast minecarts that act similarly, although no items may be used during transport, and there are also four way junctions which can be switched between connecting opposite pairs of tracks. Falling off the end of the track or crashing into other minecarts is not a possible situation in the setups in any of these games, so we avoid that situation in our proofs. However, an intermediary simplyfying step considers a model where minecarts bounce off of each other when they collide.

\begin{figure}[htp]
\centering
\includegraphics[scale=1.00]{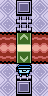} \hspace{4em} 
\includegraphics[scale=1.00]{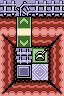} \hspace{4em} 
\includegraphics[scale=1.00]{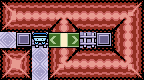}
\caption{Gadgets for Oracle of Ages/Seasons: 1-Toggle while walking (left), 1-Toggle while riding a minecart (center), Diode while walking (right) }
\label{fig:minecart-basic-gadgets}
\end{figure}

\begin{figure}[htp]
\centering
\includegraphics[scale=1.00]{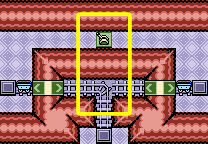}
\caption{Minecart 2-to-1 Toggle for Oracle of Ages/Seasons, simplified under the assumption that minecarts bounce off of stationary minecarts. Adding minecart 1-toggles at each stop would achieve the same bouncing effect.}
\label{fig:minecart-2-to-1}
\end{figure}
    
\begin{theorem}
	Generalized 2D Zelda with a sword, minecarts with tracks, levers to switch T-junctions, and minecart-only doors is PSPACE-complete.
\end{theorem}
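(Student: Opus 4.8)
The plan is to prove PSPACE-completeness in the usual two steps, membership and hardness, and the real work is entirely in the hardness direction.

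\textbf{Membership.} I would argue membership in PSPACE exactly as in Lemma~\ref{thm:magnet-pspace} and the Cane of Pacci proof: the only quantities that vary during play are Link's position, his current minecart occupancy, and the discrete configuration of each T-junction lever and each minecart-only door, all of which are polynomially bounded. Hence a nondeterministic polynomial-space simulator can guess and verify a sequence of button presses that reaches the goal, and Savitch's Theorem \cite{SAVITCH1970177} places the problem in PSPACE. This step is necessary for the ``complete'' claim but is routine.

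\textbf{Hardness via gadgets.} For PSPACE-hardness I would use the one-player motion-planning-through-gadgets framework of \cite{Toggles_FUN2018, Gadgets_ITCS2020}, reducing from reachability in a system of gadgets already known to be PSPACE-complete there. The reduction then amounts to realizing the required gadgets — together with a one-way (diode) and a crossover for embedding an arbitrary, possibly non-planar connection graph — using only the available minecart mechanics: tracks, T-junction levers, minecart-only doors, and the sword. Concretely, I would first build the simple gadgets of Figure~\ref{fig:minecart-basic-gadgets} (a 1-toggle traversed on foot, a 1-toggle traversed while riding a minecart, and a walking diode), checking in each case that the lever/track/door configuration exactly implements the intended state-and-traversal behavior and that Link cannot cheat. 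I would then compose these into the 2-to-1 toggle of Figure~\ref{fig:minecart-2-to-1}, which supplies the stateful, direction-flipping behavior that drives PSPACE-hardness and lets us simulate the source gadget, and finally wire copies together according to the source system, inserting crossovers wherever the graph is non-planar.

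\textbf{The bouncing simplification.} To keep the gadget analysis local and tractable, I would as a first pass work in the simplified model in which a moving minecart bounces off a stationary minecart, which is the regime depicted in Figure~\ref{fig:minecart-2-to-1}. This assumption must then be discharged: as the caption indicates, inserting minecart 1-toggles at each stop reproduces exactly the same bouncing behavior using only mechanics that genuinely occur in Oracle of Ages/Seasons. I would also confirm that the final layouts never force a minecart to run off the end of a track or collide fatally — the situations the games forbid and which the theorem statement therefore lets us avoid.

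\textbf{Main obstacle.} The hard part is the gadget realization combined with ruling out unintended shortcuts. I must ensure that the levers and minecart-only doors can only be operated in the order and from the sides that the gadget semantics permit — in particular that Link cannot dismount and walk around a minecart line, re-enter a toggle from the wrong direction, or use the sword to bypass an obstacle — and that the bouncing assumption is both faithfully and removably simulated, so that the finished construction lives entirely within the real game's minecart physics while still implementing a PSPACE-complete gadget system.
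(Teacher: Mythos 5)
Your proposal follows essentially the same route as the paper: PSPACE membership by nondeterministic polynomial-space simulation, then hardness via the gadgets framework, building the same basic minecart gadgets (walking and riding 1-toggles), working under the bounce-off-stationary-minecarts simplification and discharging it by placing minecart 1-toggles at the stops, with the 2-to-1 toggle as the stateful core. The one step you leave unnamed is made concrete in the paper: the reduction source is motion planning with Locking 2-Toggles \cite{Gadgets_ITCS2020}, which the paper realizes by surrounding the 2-to-1 toggle with four 1-toggle entrances and a lever reachable from both sides, so neither the diode (explicitly noted as unused) nor a crossover is actually needed.
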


\begin{GAMES}{|l|llll|}
    \hline
    \rowcolor{header}
	    \fitheader{Applicable Games} & \fitheader{Minecarts} & \fitheader{Switches} & \fitheader{Blocking T-junction} & \fitheader{Minecart-only door} \\
    \hline
	    OoA, OoS, TMC & Minecarts & Levers & T-junction tile & Minecart-only door \\
    \hline
\end{GAMES}

\begin{proof}
	We reduce from motion-planning with Locking 2-Toggles \cite{Gadgets_ITCS2020}.
	
	We start with basic gadgets, shown in Figure~\ref{fig:minecart-basic-gadgets}. The left image shows a 1-toggle while walking, which consists of a single minecart which can through a minecart-only door. When a minecart is present on Link's side of the door, he can ride it to the other side, and otherwise Link can't do anything else. The 1-toggle while riding a minecart consists of a single T-junction which leads to a minecart stop with a lever controlling the junction. When Link is riding a minecart toward the junction, if it is rotated toward him, he will end up in the minecart stop. At this point, he can flip the lever, and get back in the minecart to continue out the other side. If he enters the junction when it is rotated away from him, he bounces off and returns to where he came from. Although the construction for a diode is not used in our proof we present it here because it may be useful in future constructions and it shows that Zelda with minecarts is not reversible, which initially surprised us.
	
	Figure~\ref{fig:oracle-minecart-l2t} shows our construction of a locking 2-toggle. Without loss of generality, we assume that if a moving minecart collides with a stationary minecart at a minecart stop, it will bounce backwards in the same way that it will bounce off of the dead-end side of a junction. The top half of Figure~\ref{fig:oracle-minecart-l2t} depicts the simplified construction using this assumption. By adding minecart 1-toggles in front of every minecart stop, as we also show in the bottom half of Figure~\ref{fig:oracle-minecart-l2t}, this simplifying assumption can be dropped.
	
	The construction is centered around the 2-to-1 toggle gadget displayed in Figure~\ref{fig:minecart-2-to-1} under our simplifying assumption that minecarts which collide simply bounce off of each other and return the direction they came from. This initially allows Link to enter from the left or the right side and exit from the bottom. Afterwards, Link is only able to go from the bottom to the side from which he last came. Thus this looks like a locking 2-toggle with two of the locations merged.
	
	Now consider the entire gadget which has four entrances made of 1-toggles in the four corners of the gadget. To traverse from the bottom-left to the top-left, first Link uses the 1-toggle to enter the left section. The corridor to the top-left exit is blocked by a minecart, and the only way to move it is to ride it to the central minecart stop and return through the bottom 1-toggle. If the central junction was set to turn right, then Link must first flip the lever, which is easily accessible from both the left and right sections.
	
	In the open state, the central minecart stop is unoccupied, so Link can successfully relocate the cart blocking his path and proceed by using the 1-toggle at the top-left exit. If Link tries later to use the right traversal line, he will not be able to relocate the minecart blocking the top-right exit because the central minecart stop will be occupied. To undo this traversal and restore the open state, Link just needs to retrace his steps, and because the minecart brought from the central area blocks the bottom-left entrance, that is Link's only available choice.
	
	By symmetry, the above arguments also apply to right-line traversals.
\end{proof}

\begin{figure}[htp!]
\centering
\includegraphics[width=0.7\textwidth]{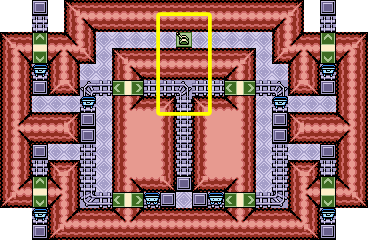}
\includegraphics[width=\textwidth]{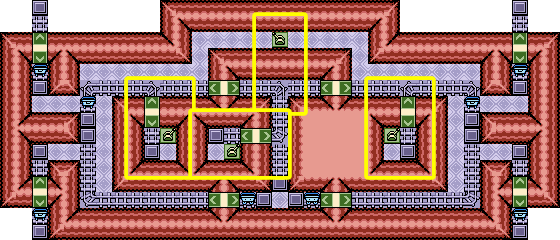}
\caption{Minecart-based Locking 2-Toggle gadget for Oracle of Ages/Seasons. The simplified top figure assumes minecarts bounce off of stationary minecarts, while the bottom adds minecart 1-toggles to get the same effect. Levers and the junctions they switch are highlighted in yellow. Shown in the open state. The traversal lines go from bottom to top on the left and on the right.}
\label{fig:oracle-minecart-l2t}
\end{figure}

The above proof made clever use of the fact that minecarts can block Link's path, forcing link to enter the minecart. It would be interesting to know whether this is necessary for hardness.

\begin{question}
	Can one show PSPACE-hardness for Zelda minecarts with T-junctions without using the fact that minecarts can be used to block paths?
\end{question}

It may also be of interest, or enjoyable to find a simpler reduction for the four way junction. It is tempting to say hardness should follow from the 1-switch-2-doors result of \cite{van2015pspace}, however, the minecarts create a 1-toggle like constraint on the pathways. This suggests a reduction from a toggle-lock or locking 2-toggle will be more appropriate.

\section{Open Problems}
\label{sec:open problems}

Table~\ref{tbl:items} lists many of the items, mechanics, and obstacle types
from across all of the Zelda games, along with known complexity results.
This table gives a sense of the significant work left to complete the
quest of Zelda complexity.

\begin{table}[btp]
	\centering
	\small
	\rowcolors{2}{gray!75}{white}
	\def\maybe#1{\unskip}
	\begin{tabular}{ | L{5.2cm} | L{2.4cm} | }
		\rowcolor{header}
		\header{Items and Mechanics} & \header{Known} \\ 
		Small Key & Theorem \ref{thm:hook-pot-key} \\
        Sword & Theorem \ref{thm:DodongoNP} \\
        Bow, Slingshot, Seed Shooter & Theorem \ref{CrystalSwitchNP} \\
        Shield, Mirror Shield & \\
        Bombs & Theorems \ref{thm:BombNP}, \ref{CrystalSwitchNP}, \ref{thm:GuardianNP} \\
        Boomerang & \\
        Heart Container & \\
        Fairy, Secret Medicine & Theorem \ref{thm:FairiesNP} \\
        Health Refill Potion & \\
        Flippers, Zora Armor Set & \\
        Piece of Heart, Spirit Orbs & \\
        Hookshot, Grapple Hook, Clawshot, Gripshot & Theorems \ref{thm:hookshot-P}, \ref{thm:hook-pot-key}, Prev Work \cite{Nintendo_TCS} \\
        Power Bracelet & \\
        Sword Beams & Theorem \ref{thm:BuzblobNP} \\
        Warp Song, Warp Seeds, Warp Bell & \\
        Blue Ring, Blue Mail & \\
        Hammer & \\
        Pegasus Boots, Pegasus Seeds & Theorem \ref{thm:RocNP} \\
        Shovel, Digging Mitts, Mole Mitts & \\
        Magic Rod, Fire Rod, Fire Gloves & \\
        Roc's Feather, Roc's Cape & Theorem \ref{thm:RocNP} \\
        Candle, Lamp & \\
        Fire Arrows, Ember Seeds & \\
        Gust Jar, Deku Leaf, Whirlwind, Gust Bellows & \\
        Magic Refill Potion & \\
        Bombchu & \\
        Four Sword, Dominion Rod, Command Melody & \\
        Bombos, Ether, Quake, Din's Fire & \\
        Remote Boomerang & \\
        Deku Stick, Boku Stick & \\
        Fire Resist, Lava Resist, Heat Resist & \\
        Ice Arrows & Theorem \ref{Thm:IceArrowNP} \\
        Cane of Byrna, Magic Cape, Nayru's Love, Magic Armor & Theorem \ref{thm:InvincibilityNP} \\ 
        Ember Seeds, Magic Powder, Oil Lantern & \\
        Red Ring, Red Mail & \\
        Timed Expiring Items & \\
        Bomb Arrows & \\
        Bug Net & \\
        Ice Rod, Cryonis Rune & \\
        Magic Mirror, Harp of Ages, Rod of Seasons & \\
        \hline
	\end{tabular}
    \begin{tabular}{ | L{5cm} | L{2cm} | }
		\rowcolor{header}
		\header{Items and Mechanics} & \header{Known} \\ 
        Magnetic Gloves, Magnesis & Theorems \ref{sec:magnet-gloves}, \ref{thm:MagnesisPSPACE}\\
        Super Bomb, Powder Keg & \\
        Silver Scale, Golden Scale, Zora Tunic, Zora Armor, Mermaid Suit & \\
        Cane of Somaria & \\
        Chain Chomp & \\
        Deku Nuts & \\
        Farore's Wind, Travel Medallion & \\
        Hover Boots & \\
        Iron Boots & \\
        Moon Pearl, Portals, Stumps & \\
        Paraglider, Deku Leaf & Theorem \ref{thm:RocNP} \\
        Remote Bomb & \\
        Sand Wand, Sand Rod & \\
        Tornado Rod & \\
        Whip & \\
        Air Potion & \\
        Axe & \\
        Ball and Chain & \\
        Beetle, Hook Beetle & \\
        Cane of Pacci & Theorem \ref{thm:CaneOfPacciFPT} \\
        Lightning Rod & \\
        Minish Cap, Gnat Hat & \\
        Phantom Hourglass, Sand of Hours & \\
        Ravio's Bracelet & \\
        Song of Time 3-day reset & \\
        Spinner (item) & \\
        Stasis Rune & \\
        Switch Hook & Theorem \ref{thm:switchhookP} \\
        Tingle Tuner & \\
        Water Rod & \\
        \hline
        \hline
		\rowcolor{header}
		\header{Obstacles} & \header{Known} \\ 
        Raised red \& blue barriers & Theorems \ref{thm:CrystalSwitchP}, \ref{CrystalSwitchNP}, Prev Work \cite{Nintendo_TCS} \\
        Pots & Theorems \ref{thm:hookshot-P}, \ref{thm:hook-pot-key} \\
        Pits, unswimmable water or lava & Theorems \ref{thm:hookshot-P}, \ref{thm:switchhookP}, \ref{thm:RocNP}, \ref{thm:ColorFloorNP}, \ref{thm:BuzblobNP}, \ref{thm:MagnesisPSPACE} \\
        Minecarts & Theorem \ref{sec:Minecarts} \\
        Floor tile puzzles & Theorem \ref{thm:ColorFloorNP} \\
        Spinners (obstacle) & Prev Work \cite{Toggles_FUN2018} \\
        Metal 3D Physics Objects & Theorem \ref{thm:MagnesisPSPACE} \\
        Floor spikes, walkable lava or fire, long falls & Theorems \ref{thm:FairiesNP}, \ref{thm:InvincibilityNP} \\ 
		\hline 
	\end{tabular}
	\caption{All Items and Mechanics (and associated known results) from across all the Zelda games, as documented on \cite{zeldadungeon} and \cite{fandom}, plus known results for some Obstacles.}
	\label{tbl:items}
\end{table}

It appears the first two Zelda games, The Legend of Zelda and The Adventure of Link, are the only Zelda games which have not been shown to be PSPACE-complete. Resolving the NP versus PSPACE gap for these oldest examples is a remaining challenge. 

In The Legend of Zelda: Oracle of Ages, the Crown Dungeon has a collection of block-pushing puzzles with an interesting twist: all blocks with the same color move simultaneously (if there is an empty tile to move into) when any one of them is pushed. This global manipulation is similar to a discretization of the uniform global control for swarm robotics studied in \cite{becker2013massive}.

The Iron Boots are a common item in 3D Zelda games, first introduced in The Legend of Zelda: Ocarina of Time to allow Link to sink underwater further than he could swim, and was expanded upon in The Legend of Zelda: The Wind Waker as a means to walk against strong winds and activate springboards, and even further in The Legend of Zelda: Twilight Princess by adding interactions with magnetic forces. While the Iron Boots are a nonconsumable inventory item with no inherent motive force, the fact that Link must choose whether or not to wear them to traverse a variety of hazardous terrain gives them the potential to be useful when combined with other items.

\section*{Acknowledgments}

This work was initiated during open problem solving in the MIT class on
Algorithmic Lower Bounds: Fun with Hardness Proofs (6.892)
taught by Erik Demaine in Spring 2019.
We thank the other participants of that class
for related discussions and providing an inspiring atmosphere.
In particular, we thank Lily Chung for contributing to the polynomial-time algorithms in this paper.

We also thank all the contributors to Zelda Wiki \cite{fandom} and Zelda Dungeon \cite{zeldadungeon} for their invaluable information, and to The Spriters Resource \cite{mistermike} and VGMaps.com \cite{rocktyt} for serving as indispensable tools for providing easy and comprehensive access to the sprites used in our figures.

Finally, of course, we thank Nintendo, Capcom, and other associated developers for bringing these timeless classics to the world.

\newpage

\bibliographystyle{alpha}
\bibliography{bibliography}

\appendix

\magicappendix


\end{document}